\newcommand{\R}{{\mathds{R}}}
\newcommand{\Rp}{{\mathds{R}_{> 0}}}
\newcommand{\Rpz}{{\mathds{R}_{\geq 0}}}
\newcommand{\N}{{\mathds{N}}}
\newcommand{\Np}{{\mathds{N}_{\geq 1}}}
\newcommand{\I}{{\mathds{I}}}
\DeclareRobustCommand{\legendsquare}[1]{%
	\textcolor{#1}{\rule{1.5ex}{1.5ex}}%
}
\DeclareRobustCommand\sampleline[1]{%
	\tikz\draw[#1] (0,0) (0,\the\dimexpr\fontdimen22\textfont2\relax)
	-- (1.5em,\the\dimexpr\fontdimen22\textfont2\relax);%
}
\definecolor{x1}{rgb}{0 0.5 0.2}
\definecolor{x2}{rgb}{0 0 1}
\definecolor{xh1}{rgb}{0.9 0.5 0}
\definecolor{xh2}{rgb}{1 0.1 0.4}
\definecolor{X1}{rgb}{0 0 1}
\definecolor{X2}{rgb}{1 0 0}
\DeclareRobustCommand{\legendsquare}[1]{%
	\textcolor{#1}{\rule{1.5ex}{1.5ex}}%
}
\definecolor{unsafe}{rgb}{0.9, 0, 0.1}
\definecolor{initt}{rgb}{0.1, 0.3, 1}
\definecolor{state}{rgb}{0, 0.9, .7}
\definecolor{START}{rgb}{0 0.7 0.6}
\definecolor{TARGET}{rgb}{0.5 0.8 1}
\definecolor{OBSTACLES}{rgb}{0.9 0 0.3}
\definecolor{denim}{rgb}{0.08, 0.38, 0.74}
\definecolor{deeppink}{rgb}{1.0, 0.08, 0.58}
\tikzset{
	basic box/.style={
		shape=rectangle, rounded corners, align=center, draw=#1, fill=#1!25, minimum height=20mm, minimum width = 20mm},
	header node/.style={
		node family/width=header nodes,
		font=\strut\Large\ttfamily,
		text depth=+.3ex, fill=white, draw},
	header/.style={%
		inner ysep=+1.5em,
		append after command={
			\pgfextra{\let\TikZlastnode\tikzlastnode}
			node [header node] (header-\TikZlastnode) at (\TikZlastnode.north) {#1}
			node [span=(\TikZlastnode)(header-\TikZlastnode)]
			at (fit bounding box) (h-\TikZlastnode) {}
		}
	},
	fat blue line/.style={ultra thick, blue}
}
\definecolor{mediumblue}{rgb}{0.0, 0.0, 0.8}
\definecolor{mediumcandyapplered}{rgb}{0.89, 0.02, 0.17}
\definecolor{nazar}{rgb}{0.7, 0.5, 0.9}
\let\NAT@parse\undefined
\definecolor{lightblue}{rgb}{0.30,0.75,0.93}
\DeclareRobustCommand\sampleline[1]{%
	\tikz\draw[#1] (0,0) (0,\the\dimexpr\fontdimen22\textfont2\relax)
	-- (2em,\the\dimexpr\fontdimen22\textfont2\relax);%
}
\newtheorem{theorem}{Theorem}[section]
\newtheorem{lemma}[theorem]{Lemma}
\newtheorem{problem}[theorem]{Problem}
\newtheorem{definition}[theorem]{Definition}
\newtheorem{remark}[theorem]{Remark}
\numberwithin{equation}{section}
\newenvironment{nouppercase}{%
	\renewcommand{\uppercasenonmath}[1]{}}{}
\newtcolorbox{resp}[1][]{%
	enhanced jigsaw,%
	colback=gray!5!white,%
	colframe=gray!80!black,%
	size=small,%
	boxrule=1pt,%
	halign title=flush center,%
	coltitle=black,%
	breakable,%
	drop shadow=black!50!white,%
	attach boxed title to top left={xshift=1cm,yshift=-\tcboxedtitleheight/2,yshifttext=-\tcboxedtitleheight/2},%
	minipage boxed title=3cm,%
	boxed title style={%
		colback=white,%
		size=fbox,%
		boxrule=1pt,%
		boxsep=2pt,%
		underlay={%
			\coordinate (dotA) at ($(interior.west) + (-0.5pt,0)$);
			\coordinate (dotB) at ($(interior.east) + (0.5pt,0)$);
			\begin{scope}[gray!80!black]
				\fill (dotA) circle (2pt);
				\fill (dotB) circle (2pt);
			\end{scope}
		}%
	},%
	#1%
}
\begin{document}

\begin{abstract}
	Finite abstractions (a.k.a. symbolic models) offer an effective scheme for approximating the complex continuous-space systems with simpler models in the \emph{discrete-space} domain. A crucial aspect, however, is to establish a formal relation between the original system and its symbolic model, ensuring that a discrete controller designed for the symbolic model can be effectively implemented as a \emph{hybrid} controller (using an interface map) for the original system. This task becomes even more challenging when the exact mathematical model of the continuous-space system is unknown. To address this, the existing literature mainly employs \emph{scenario-based} data-driven methods, which require collecting a large amount of data from the original system. In this work, we propose a data-driven framework that utilizes only \emph{two input-state trajectories} collected from unknown nonlinear polynomial systems to synthesize a hybrid controller, enabling the desired behavior on the unknown system through the controller derived from its symbolic model. To accomplish this, we employ the concept of alternating simulation functions (ASFs) to quantify the closeness between the state trajectories of the unknown system and its data-driven symbolic model. By satisfying a specific rank condition on the collected data, which intuitively ensures that the unknown system is persistently excited, we directly design an ASF and its corresponding hybrid controller using finite-length data without explicitly identifying the unknown system, while providing correctness guarantees. This is achieved through proposing a data-based sum-of-squares (SOS) optimization program, enabling a systematic approach to the design process. We illustrate the effectiveness of our data-driven approach through a case study.
\end{abstract}

\title{{\LARGE Abstraction-based Control of Unknown Continuous-Space Models  \vspace{0.2cm}\\ with Just Two Trajectories \vspace{0.4cm}}}

\author{{\bf {\large Behrad Samari}}}
\author{{\bf {\large Mahdieh Zaker}}}
\author{{{\bf {\large Abolfazl Lavaei}}}\vspace{0.4cm}\\
	{\normalfont School of Computing, Newcastle University, United Kingdom}\\
\texttt{\{behrad.samari, mahdieh.zaker, abolfazl.lavaei\}@newcastle.ac.uk}}

\pagestyle{fancy}
\lhead{}
\rhead{}
  \fancyhead[OL]{Behrad Samari, Mahdieh Zaker, and Abolfazl Lavaei}

  \fancyhead[EL]{Abstraction-based Control of Unknown Continuous-Space Models with Just Two Trajectories}
  \rhead{\thepage}
 \cfoot{}
 
\begin{nouppercase}
	\maketitle
\end{nouppercase}

\section{Introduction}\label{Intro}
Over the past two decades, formal methods have garnered significant attention as a promising technique for synthesizing controllers that enforce high-level logic specifications, such as those expressed by linear temporal logic (LTL) formulae~\cite{pnueli1977temporal}. These methods enable the reliable control of complex dynamical systems while ensuring rigorous mathematical guarantees. However, synthesizing formal controllers for complex dynamical systems, especially those with nonlinear terms, to satisfy high-level logic properties remains inherently challenging, primarily due to the computational complexity associated with \emph{continuous} state and input spaces.

In order to address the encountered computational challenges, a promising approach in the literature is to construct finite abstractions (also referred to as symbolic models), which serve as approximate representations of continuous-space systems where each finite state corresponds to a group of continuous states from the original (concrete) systems. These finite abstractions can then be used as substitutes for the original systems, enabling verification or controller synthesis on the abstract models, and subsequently, the results can be translated back to the original complex systems. This is achieved by establishing a similarity relationship between the state trajectories of the original system and its finite abstraction using alternating simulation functions (ASFs), ensuring that the original system satisfies the same property as the abstract model.

Symbolic models are generally divided into two distinct classes: \emph{sound and complete} abstractions~\cite{tabuada2009verification}. Complete abstractions provide both \emph{necessary and sufficient} guarantees, implying that a controller enforcing a desired property on the abstraction exists \emph{if and only if} such a controller exists for the original system. On the contrary, sound abstractions provide only \emph{sufficient} guarantees, meaning the failure to synthesize a controller using a sound abstraction does not necessarily imply the absence of a controller for the original system. Despite the fruitfulness of symbolic model methods in complex systems analysis, the precise mathematical model of the original system is required for the construction of such (sound or complete) abstract models (see \emph{e.g.,} \cite{tabuada2009verification,zamani2014symbolic,girard2009hierarchical,reissig2016feedback}). Nonetheless, in many practical scenarios, the mathematical models are either unknown or challenging to obtain.

\noindent{\textbf{Data-Driven Techniques.}} To address this key difficulty, data-driven techniques have been established in two main categories: \emph{indirect and direct}~\cite{dorfler2022bridging}. Indirect data-driven approaches focus on learning unknown dynamical models via system identification techniques~\cite{hou2013model}. However, accurately identifying these models is often computationally demanding, especially for unknown complex systems with nonlinear dynamics. Furthermore, even after identifying an approximate model, it is necessary to derive the similarity relation between the identified dynamics and its symbolic abstraction using model-based techniques. Consequently, the inherent complexity arises at two levels: model identification and deriving the required relation. In contrast, direct data-driven approaches evade the system identification process, instead leveraging data \emph{directly} to construct symbolic models and establish similarity relations~\cite{dorfler2022bridging}.

Data-driven techniques (including indirect and direct) have been widely used thus far to construct symbolic models for unknown systems. Existing results include building symbolic models for unknown systems with a guaranteed confidence~\cite{lavaei2022data}, construction of symbolic models for monotone systems with disturbances~\cite{makdesi2021efficient}, sample-based sequential~\cite{banse2023data1} and learning-based abstract construction~\cite{hashimoto2022learning}, probabilistic verification of logic specifications for deterministic systems~\cite{coppola2022data}, and synthesizing controllers for unknown systems through symbolic models with provable guarantees~\cite{devonport2021symbolic,coppola2024data,banse2023data2,ajeleye2023data}. In addition, \cite{lavaei2023symbolic} and \cite{samari2024data} propose abstraction-based controller synthesis frameworks for interconnected networks, offering probabilistic and deterministic (\emph{i.e.,} confidence 1) correctness guarantees, respectively.

The majority of the aforementioned literature focuses on symbolic model construction and ensuring their correctness using \emph{scenario-based} approaches~\cite{calafiore2006scenario,campi2009scenario}. While these methods have proven effective for symbolic model construction in unknown systems, they rely on the assumption that the data are \emph{independent and identically distributed (i.i.d.)}. Consequently, only a single input-output (or input-state) data pair can be extracted from each trajectory \cite{calafiore2006scenario}, necessitating \emph{multiple independent trajectories}—potentially up to millions in practical scenarios—due to the \emph{exponential sample complexity} inherent in scenario-based techniques when applied to the robust problem~\cite{esfahani2014performance}. As a result, scenario-based methods are particularly well-suited for system simulators, where generating numerous independent trajectories is practical.

In contrast, \emph{trajectory-based} approaches~\cite{de2019formulas} do not require i.i.d. data, allowing a single trajectory to be collected over a specified horizon for analysis, provided that a rank condition is satisfied, ensuring the data is sufficiently informative for the analysis. Several studies have adopted a trajectory-based approach over the past few years, including, but not limited to, synthesizing robust model predictive controllers for linear time-invariant systems~\cite{berberich2020data}, input-to-state stabilizing controllers~\cite{chen2024data}, and safety controllers utilizing control barrier certificates for continuous-time~\cite{nejati2022data,akbarzadeh2024data} and discrete-time nonlinear polynomial systems~\cite{samari2024single}, and for discrete-time linear control systems with
wireless communication networks~\cite{akbarzadeh2024data1}. Nevertheless, to the best of our knowledge, no existing research has explored finite abstractions using a trajectory-based approach, due to the inherent complexity of establishing similarity relations.

\noindent{\textbf{Original Contributions.}} Driven by the challenges outlined in prior data-driven studies on constructing symbolic models, this work aims to develop a data-driven trajectory-based methodology for unknown discrete-time input-affine nonlinear systems with polynomial dynamics. Our scheme focuses on constructing symbolic models and establishing the similarity relation between the original, unknown systems and their symbolic counterparts. The cornerstone of our framework is the employment of alternating simulation functions (ASFs) with the pursuit of capturing the mismatch between the original system and its symbolic model. Within the proposed approach, we start with gathering just two trajectories (\emph{i.e.,} two sets of input-state data) from the unknown system while ensuring a specific rank condition is satisfied (\emph{i.e.,} a required minimum number of data is gathered), which means that the collected data is rich enough for conducting the analysis. We then construct ASFs directly from the finite-length samples while ensuring correctness guarantees. This is then followed by synthesizing a controller for the symbolic model and transferring the result back to the original unknown system, leveraging a hybrid interface function. We showcase the efficacy of our approach via a case study.
\section{Problem Formulation}\label{Prob Form}

\subsection{Notation}
The set of real numbers is denoted by $\R$, with $\Rpz$  and $\Rp$ representing the sets of non-negative and positive real numbers, respectively. The set of non-negative integers is expressed as $\N$, while $\Np$ signifies the set of positive integers. 
The $n \times n$ identity matrix is denoted by $\I_n$, while $\mathbf{0}_n$ refers to the zero vector of dimension $n$. For any matrix $A$, its transpose is represented by $A^\top$. The symbol $\star$ indicates the transposed element in the symmetric position of a symmetric matrix.
The notation $P \succ 0$ ($P \succeq 0$) indicates that the \emph{symmetric} matrix $P$ is positive definite (positive semi-definite). For a square matrix $P$, the smallest and largest eigenvalues are denoted by $\lambda_{\min}(P)$ and $\lambda_{\max}(P)$, respectively.
The Euclidean norm of a vector $x \in \R^{n}$ is denoted by $\Vert x \Vert$, while for a given matrix $A$, $\Vert A \Vert$ represents its induced 2-norm. The horizontal concatenation of vectors $x_i \in \R^n$ into an $n \times N$ matrix is written as $\begin{bmatrix} x_1 & \hspace{-0.2cm} x_2 & \hspace{-0.2cm} \dots & \hspace{-0.2cm} x_N \end{bmatrix}$. The union among the sets $\mathsf{X}_i, i\in\{1,\dots, n_x\}$ is represented as $\cup_i \mathsf{X}_i$. Given sets $X$ and $Y$, a relation $\mathscr{R} \subseteq X \times Y$ is a subset of the Cartesian product $X \times Y$ that relates $x \in X$ to $y \in Y$ if $(x, y) \in \mathscr{R}$, equivalently denoted by $x \mathscr{R} y$.

\subsection{Discrete-Time Control Systems}
We commence by describing the discrete-time control systems for which we desire to construct a symbolic model.

\begin{definition}[dt-IANSP]\label{def: sys}A discrete-time input-affine nonlinear system with polynomial dynamics (dt-IANSP) is described as  
	\begin{align}\label{eq: dt-IANSP}
		\Sigma\!: x^+ = A\mathcal{M}(x) + Bu,
	\end{align}  
	where \(x^+\) denotes the state variables at the subsequent time step, specifically defined as \(x^+ \coloneq x(k + 1), \; k \in \N\). The matrix \(A \in \R^{n \times M}\) represents the system matrix, \(B \in \R^{n \times m}\) is the control matrix, and \(\mathcal{M}(x) \in \R^M\), with \(\mathcal{M}(\mathbf{0}_n) = \mathbf{0}_M\), is a vector comprising monomials of the state variables \(x \in X\),  with \(X \subseteq \R^n\) denoting the state set. The control input \(u \in U\) is associated with the system, where \(U \subseteq \R^m\) represents the control input set.  
	The dt-IANSP in \eqref{eq: dt-IANSP} is represented by the tuple \(\Sigma = (A, \mathcal{M}, B, X, U)\).
\end{definition}

Both matrices \(A\) and \(B\) are considered \emph{unknown} in our work, aligning with real-world scenarios. While the exact form of the monomial $\mathcal{M}(x)$ is assumed to be unknown, we are provided with either an \emph{upper bound on its maximum degree}—enabling the construction of $\mathcal{M}(x)$ to encompass all possible combinations of states up to that degree—or an extended vector of monomials, referred to as the \emph{extended dictionary}, which includes all monomials relevant to the system dynamics, along with additional terms that may be unnecessary, to ensure thorough representation of unknown systems. We refer to dt-IANSP \eqref{eq: dt-IANSP} as a system with a \emph{(partially) unknown} model.

\begin{remark}[On $\mathcal{M}(x)$]\label{rem: M}Note that knowing either an upper bound or an extended dictionary for $\mathcal{M}(x)$ is not a restrictive assumption. In fact, in numerous practical applications, such as electrical and mechanical systems, the system dynamics are often derived from first principles. However, the precise parameters of the system, namely the matrices \(A\) and \(B\), typically remain unknown, as is the case in our work.
\end{remark}

Having defined the dt-IANSP in \eqref{eq: dt-IANSP}, we now move on to present how symbolic models can be constructed in the following subsection.

\subsection{Symbolic Models}\label{Sym}
Analyzing the dt-IANSP $\Sigma$, which evolves in a \emph{continuous-space domain}, presents substantial challenges. To address this, one can approximate the dt-IANSP \(\Sigma\) with a symbolic model defined by \emph{discrete} sets of states and inputs \cite{pola2016symbolic}. The approximation process begins by partitioning the continuous state and input sets into finite segments, denoted as \(X = \cup_i \mathsf{X}_i\) and \(U = \cup_i \mathsf{U}_i\), respectively. Representative points \(\hat{x}_i \in \mathsf{X}_i\) and \(\hat{u}_i \in \mathsf{U}_i\) are then selected to serve as \emph{discrete} state variables and control inputs. The following definition formally outlines the required procedure for constructing a symbolic model \cite{pola2016symbolic,swikir2019compositional}.

\begin{definition}[Symbolic Model Construction]\label{def: symbolic}Given a continuous-space dt-IANSP \(\Sigma = (A, \mathcal{M}, B,\\ X, U)\), the corresponding symbolic model \(\hat{\Sigma}\) is described as  
	\begin{align}\label{eq: symbolic}
		\hat{\Sigma} = (\hat f, \hat{X}, \hat{U}),
	\end{align}  
	where the discrete state and input sets are constructed as \(\hat{X} \coloneq \big\{\hat{x}_i , i = 1, \ldots, n_{\hat{x}}\big\}\) and \(\hat{U} \coloneq \big\{\hat{u}_i , i = 1, \ldots, n_{\hat{u}}\big\}\), respectively. The transition map $\hat f: \hat X \times \hat U \to \hat X$ is defined as  
	\begin{align}\label{symbolic dyn}
		\hat{\Sigma}\!: \hat x^+ = \hat f(\hat x, \hat u) = \Pi(A\mathcal{M}(\hat{x}) + B\hat{u}),
	\end{align}  
	where the quantization map \(\Pi: X \to \hat{X}\) assigns representative points \(\hat{x} \in \hat{X}\) and \(\hat{u} \in \hat{U}\) to any \(x \in X\) and \(u \in U\) from the corresponding partitions. This quantization map satisfies the inequality  
	\begin{align}\label{eq: quant}
		\|\Pi(x) - x\| \leq \delta, \quad \forall x \in X,
	\end{align}  
	where \(\delta \coloneq \sup \big\{\|x - x'\| , x, x' \in \mathsf{X}_i, \; i = 1, 2, \ldots, n_{\hat{x}}\big\}\) is the state discretization parameter.
\end{definition}

Since the symbolic models defined in \eqref{eq: symbolic} operate within a discrete domain, algorithmic methods from computer science can be leveraged to synthesize controllers that enforce complex logical properties \cite{baier2008principles}, \emph{e.g.,} reach-while-avoid. The key challenge, however, lies in ensuring that the properties of interest illustrated by the symbolic models are accurately transferred to the original systems. To address this, a similarity relation between the state trajectories of the two systems must be established, which is achieved using the concept of \emph{alternating simulation functions}, as detailed in the following subsection.

\subsection{Alternating Simulation Functions}
We begin by presenting the concept of an alternating simulation function between a dt-IANSP and its symbolic model. The following definition formalizes this concept \cite{tabuada2009verification,swikir2019compositional}.

\begin{definition}[ASF]\label{def: ASF}Consider a dt-IANSP \(\Sigma = (A, \mathcal{M}, B, X, U)\) and its symbolic model \(\hat{\Sigma} = (\hat{f}, \hat{X}, \hat{U})\). A function \(\pmb{\mathcal{S}}: X \times \hat{X} \to \Rpz\) is called an alternating simulation function (ASF) from \(\hat{\Sigma}\) to \(\Sigma\), denoted as \(\hat{\Sigma} \preceq_{\pmb{\mathcal{S}}} \! \Sigma\), if the following conditions are satisfied:
	\begin{subequations}
		\begin{itemize}
			\item For all \(x \in X\) and \(\hat{x} \in \hat{X}\):
			\begin{align}
				\alpha \|x - \hat{x}\|^2 \leq \pmb{\mathcal{S}}(x, \hat{x}), \label{eq: con1-def}
			\end{align}
			\item For all \(x \in X\), \(\hat{x} \in \hat{X}\), and \(\hat{u} \in \hat{U}\), there exists \(u \in U\) such that:
			\begin{align}
				\pmb{\mathcal{S}}\big(A\mathcal{M}(x) + Bu, \Pi(A\mathcal{M}(\hat{x}) + B\hat{u})\big) \leq \gamma \pmb{\mathcal{S}}(x, \hat{x}) + \rho \|\hat{u}\| + \psi, \label{eq: con2-def}
			\end{align}
		\end{itemize}
	\end{subequations}
	for some \(\alpha, \psi \in \Rp\), \(\rho \in \Rpz\), and \(\gamma \in (0,1)\).
\end{definition}

\begin{remark}[Interface Function]\label{Remark-interface}Condition \eqref{eq: con2-def} in Definition \ref{def: ASF}, essentially ensures the existence of a hybrid interface function \( u = u_{\hat{u}}(x, \hat{x}, \hat{u}) \) that meets the inequality \eqref{eq: con2-def}. This interface function enables the translation of a synthesized controller \( \hat{u} \) for the symbolic model \( \hat{\Sigma} \) into a corresponding controller \( u \) for the original system \( \Sigma \).
\end{remark}

\begin{remark}[Interpretation of Definition \ref{def: ASF}]\label{Remark-insight}The ASF defines a similarity relationship between state trajectories of the dt-IANSP $\Sigma$ and its symbolic model. Specifically, if the initial states of \(\Sigma\) and \(\hat{\Sigma}\) are sufficiently close, as quantified by \(\pmb{\mathcal{S}}(x, \hat{x})\) in \eqref{eq: con1-def}, their trajectories will remain in proximity as the systems evolve over time, as described by the right-hand side terms of the inequality \eqref{eq: con2-def} \cite{tabuada2009verification}.
\end{remark}

The following theorem leverages an ASF to measure the distance between the state trajectories of a dt-IANSP \(\Sigma\) and its symbolic model \(\hat{\Sigma}\).

\begin{theorem}[Closeness Guarantee]\label{thm: model-based}Consider a dt-IANSP  \(\Sigma = (A, \mathcal{M}, B, X, U)\) and its symbolic model \(\hat{\Sigma} = (\hat{f}, \hat{X}, \hat{U})\). Assume that $\pmb{\mathcal{S}}$ is an ASF from $\hat{\Sigma}$ to $\Sigma$ as defined in Definition \ref{def: ASF}, and there exists $\nu \in \Rp$ such that $\Vert \hat u \Vert \leq \nu, \forall \hat u \in \hat U$. Then, a relation $\mathscr{R} \subseteq X \times \hat{X}$, which is defined as~\cite{tabuada2009verification}
	\begin{subequations}\label{eq: relation & error}
		\begin{align}\label{eq: relation}
			\mathscr{R} \coloneq \big\{\!(x, \hat{x}) \in X \times \hat{X} \mid \pmb{\mathcal{S}}(x, \hat{x}) \leq \max \big(\bar{\rho}\nu, \bar\psi\big)\!\big\}\!,
		\end{align}
		is an $\epsilon$-approximate alternating simulation relation from $\hat{\Sigma}$ to $\Sigma$ with
		\begin{align}\label{eq: error}
			\epsilon=\left(\frac{\max \big(\bar{\rho}\nu, \bar\psi\big)}{\alpha}\right)^{\!\!\tfrac{1}{2}}\!\!,\quad \text{where} ~~ \bar{\rho} = \frac{(1 + \eta_2) \eta_3 }{(1 - \gamma)\eta_1}\rho, ~~ \bar{\psi} = \frac{(1 + \eta_2) \eta_3 }{(1-\gamma)(\eta_3 - 1) \eta_1 \eta_2}\psi,
		\end{align}
	\end{subequations} 
	for any  $\eta_1, \eta_2 \in (0, 1)$ and $\eta_3 \in (1, 2)$.
\end{theorem}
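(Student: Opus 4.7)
The plan is to verify the two defining properties of an $\epsilon$-approximate alternating simulation relation for $\mathscr{R}$ as defined in \eqref{eq: relation}. Let $M \coloneq \max(\bar{\rho}\nu,\bar{\psi})$ throughout, so that $(x,\hat x)\in\mathscr{R}$ means $\pmb{\mathcal{S}}(x,\hat x)\le M$. The first property (closeness in the Euclidean norm) is essentially immediate from condition \eqref{eq: con1-def}: for any $(x,\hat x)\in\mathscr{R}$, I would combine $\alpha\|x-\hat x\|^2\le\pmb{\mathcal{S}}(x,\hat x)\le M$ and take a square root to obtain $\|x-\hat x\|\le\sqrt{M/\alpha}=\epsilon$, matching \eqref{eq: error}.

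The second property is forward invariance of the sublevel set $\{\pmb{\mathcal{S}}\le M\}$ under the dynamics. Given $(x,\hat x)\in\mathscr{R}$ and any $\hat u\in\hat U$, I would invoke \eqref{eq: con2-def} to produce the witness control input $u\in U$ (this is the interface function of Remark \ref{Remark-interface}), and then bound the successor simulation value as
\begin{align*}
\pmb{\mathcal{S}}(x^+,\hat x^+)\;\le\;\gamma\pmb{\mathcal{S}}(x,\hat x)+\rho\|\hat u\|+\psi\;\le\;\gamma M+\rho\nu+\psi,
\end{align*}
using $\|\hat u\|\le\nu$. To conclude $(x^+,\hat x^+)\in\mathscr{R}$, it suffices to show the clean algebraic inequality $\rho\nu+\psi\le(1-\gamma)M$.

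The main obstacle, and the real content of the theorem, is establishing this last inequality from the explicit formulas for $\bar\rho$ and $\bar\psi$. I would proceed by a case split on which term realises the maximum $M$. In the case $M=\bar\rho\nu\ge\bar\psi$, comparing the definitions of $\bar\rho$ and $\bar\psi$ yields $\psi\le(\eta_3-1)\eta_2\rho\nu$, so $\rho\nu+\psi\le\rho\nu[\,1+(\eta_3-1)\eta_2\,]$, while $(1-\gamma)M=\frac{(1+\eta_2)\eta_3}{\eta_1}\rho\nu$. In the complementary case $M=\bar\psi\ge\bar\rho\nu$, one has $\rho\nu\le\psi/[(\eta_3-1)\eta_2]$, so $\rho\nu+\psi\le\psi\cdot\frac{1+(\eta_3-1)\eta_2}{(\eta_3-1)\eta_2}$, while $(1-\gamma)M=\frac{(1+\eta_2)\eta_3}{(\eta_3-1)\eta_1\eta_2}\psi$.

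In both cases, the required inequality reduces to the single scalar condition
\begin{align*}
\eta_1\bigl[1+(\eta_3-1)\eta_2\bigr]\;\le\;(1+\eta_2)\eta_3,
\end{align*}
which I would verify directly: since $\eta_1\in(0,1)$, the left side is strictly less than $1+\eta_2\eta_3-\eta_2$, whereas the right side equals $\eta_3+\eta_2\eta_3$, and the difference $(\eta_3-1)+\eta_2>0$ closes the gap because $\eta_3>1$ and $\eta_2>0$. This, together with the chosen ranges $\eta_1,\eta_2\in(0,1)$ and $\eta_3\in(1,2)$, completes the forward invariance argument and therefore the proof. I anticipate this case-based parameter calibration to be the only delicate step; everything else is a direct substitution into the ASF inequalities of Definition \ref{def: ASF}.
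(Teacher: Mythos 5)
Your proof is correct, and at the decisive step it is actually more self-contained than the paper's. Both arguments share the skeleton forced by the definition: part (i), the closeness bound $\|x-\hat x\|\le\epsilon$ obtained from \eqref{eq: con1-def} and the definition of $\mathscr{R}$, is identical in the two proofs. The divergence is in part (ii). The paper establishes invariance of $\mathscr{R}$ by asserting (without proof) the three-term max bound $\gamma\pmb{\mathcal{S}}(x,\hat x)+\rho\|\hat u\|+\psi\le\max\big\{\bar\gamma\,\pmb{\mathcal{S}}(x,\hat x),\,\bar\rho\nu,\,\bar\psi\big\}$, with an auxiliary contraction rate $\bar\gamma=1-(1-\eta_1)(1-\gamma)\in(0,1)$, an inequality valid for \emph{every} value of $\pmb{\mathcal{S}}(x,\hat x)\ge 0$; invariance then follows because on the sublevel set the right-hand side is at most $\max(\bar\rho\nu,\bar\psi)$. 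You instead restrict from the outset to $\pmb{\mathcal{S}}(x,\hat x)\le M$ with $M=\max(\bar\rho\nu,\bar\psi)$ and prove forward invariance of the sublevel set directly, reducing the claim to $\rho\nu+\psi\le(1-\gamma)M$ and then, after the case split on which term attains $M$, to the single scalar inequality $\eta_1\big[1+(\eta_3-1)\eta_2\big]\le(1+\eta_2)\eta_3$, which you verify correctly (the slack is $(\eta_3-1)+\eta_2>0$). What each route buys: yours is more elementary and supplies exactly the algebra the paper leaves implicit — your two cases are precisely the proof of the paper's max inequality restricted to the sublevel set, and the paper's version needs one additional case to handle large $\pmb{\mathcal{S}}$; conversely, the paper's state-independent bound is slightly stronger, since it shows $\pmb{\mathcal{S}}$ contracts at rate $\bar\gamma$ whenever it exceeds $\max(\bar\rho\nu,\bar\psi)$, i.e., trajectories initialized \emph{outside} $\mathscr{R}$ are driven into it — information not needed for the theorem as stated but often useful downstream. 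One pedantic remark: in your first case you divide by $\rho\nu$; this is harmless because $\psi\in\Rp$ forces $\rho\nu>0$ in that case (otherwise $\bar\rho\nu=0<\bar\psi$ and you are in the second case), but it is worth saying.
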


\begin{proof}
	The proof consists of two main parts: (i) $\forall (x,\hat x) \in \mathscr{R}$ one has 
	$\Vert x-\hat x\Vert\leq  \epsilon$, and (ii) $\forall (x,\hat x) \in \mathscr{R} \text{ and } \forall \hat u\in \hat U, \exists u \in U,\text{ such that } \forall x'\in A\mathcal{M}(x) + Bu, \exists \hat x'\in \Pi(A\mathcal{M}(\hat{x}) + B\hat{u}),$ fulfilling $(x',\hat x') \in \mathscr{R}$. The first part follows directly from condition \eqref{eq: con1-def} and the definition of the relation \(\mathscr{R}\) in \eqref{eq: relation}:
	\begin{align*}
		\alpha\Vert x-\hat x\Vert ^2\leq \pmb{\mathcal{S}}(x,\hat x) \leq \max \big(\bar{\rho}\nu, \bar\psi\big)  \quad \to \quad \Vert x-\hat x\Vert \leq \left(\frac{\max \big(\bar{\rho}\nu, \bar\psi\big)}{\alpha}\right)^{\!\!\tfrac{1}{2}} = \epsilon.
	\end{align*}
	We now move on to demonstrate the second part. Since
	\begin{align*}
		\pmb{\mathcal{S}}\big(A\mathcal{M}(x) + Bu, \Pi(A\mathcal{M}(\hat{x}) + B\hat{u})\big) \leq \gamma \pmb{\mathcal{S}}(x, \hat{x}) + \rho \|\hat{u}\| + \psi \leq \max\{\bar\gamma \pmb{\mathcal{S}}(x,\hat x), \bar{\rho}\nu, \bar\psi \},
	\end{align*}
	with $\bar\gamma, \bar{\rho}$ and $\bar\psi$ as
	\begin{align*}
		\bar\gamma = 1 - (1 - \eta_1)(1 - \gamma),\quad \bar{\rho} = \frac{(1 + \eta_2) \eta_3}{(1 - \gamma)\eta_1}\rho, \quad \bar{\psi} = \frac{(1 + \eta_2) \eta_3 }{(1-\gamma)(\eta_3 - 1) \eta_1 \eta_2}\psi,
	\end{align*} 
	for any  $\eta_1, \eta_2 \!\in\! (0, 1)$ and $\eta_3 \!\in\! (1, 2)$, one has $\pmb{\mathcal{S}}(x', \hat{x}') \leq \max \big(\bar{\rho}\nu, \bar\psi\big)$ given that $\bar \gamma\!\in\!(0,1)$ and $\pmb{\mathcal{S}}(x, \hat{x}) \leq \max \big(\bar{\rho}\nu, \bar\psi\big)$ according to \eqref{eq: relation}, signifying that $(x',\hat x') \!\in\! \mathscr{R}$, which completes the proof.
\end{proof}

\begin{remark}[\textbf{Special Case $\rho \equiv 0$}]\label{remark-rho}It is worth noting that in the case of $\rho\equiv0$ (cf. Theorem \ref{thm: main}), the result of Theorem \ref{thm: model-based} is boiled down to the following:
	\begin{align}
		\epsilon = \left(\frac{\bar\psi}{\alpha}\right)^{\!\!\frac{1}{2}}\!\!,\quad \text{where} ~~ \bar\psi = \frac{\psi}{(1 - \gamma)\eta_1},
	\end{align} 
	for any  $\eta_1 \!\in\! (0, 1).$
\end{remark}

Of note is that Theorem \ref{thm: model-based} ensures the closeness between the state trajectories of a dt-IANSP and its symbolic model, enabling the enforcement of complex properties such as safety, reachability, and reach-while-avoid \cite{tabuada2009verification}. This approach allows properties to be addressed in simplified \emph{discrete-space} domains and refined for the original \emph{continuous-space} systems while maintaining a quantifiable error bound as defined in \eqref{eq: error}, leveraging the capabilities of Theorem \ref{thm: model-based}.

\subsection{Data-Driven Symbolic Models}\label{data-symbolic}

While the mathematical model of the dt-IANSP is unavailable, one can still construct the \emph{symbolic model} $\hat{f}(\hat{x}, \hat{u}) = \Pi(A\mathcal{M}(\hat{x}) + B\hat{u})$ using data by initializing the unknown dt-IANSP with a discrete state $\hat{x}$ under a discrete input $\hat{u}$ to compute $A\mathcal{M}(\hat{x}) + B\hat{u}$. With a state discretization parameter $\delta$ and by applying the quantization map $\Pi$, the symbolic model $\hat{f}(\hat{x}, \hat{u}) = \Pi(A\mathcal{M}(\hat{x}) + B\hat{u})$ is determined as the \emph{closest representative point} to the value of $A\mathcal{M}(\hat{x}) + B\hat{u}$ that satisfies condition~\eqref{eq: quant}. This procedure is repeated for all combinations of discrete state variables $\hat{x}\in \hat X$ and input variables $\hat{u}\in \hat U$, forming the data-driven symbolic model. This approach is fully aligned with the process of constructing symbolic models in model-based scenarios. It is worth noting that this way of complete symbolic model construction introduces an approximation error, represented by $\psi$ in condition~\eqref{eq: con2-def} and correspondingly in relation~\eqref{eq: relation & error}, even in the model-based case. We formally derive a closed-form expression for this error using data, based on the discretization parameter $\delta$, in Theorem~\ref{thm: main}. In particular, if $\delta$ approaches zero, the approximation error $\psi$ also goes to zero, while the cardinality of the discrete set $\hat{X}$ tends to infinity, illustrating that the discrete set increasingly resembles the continuous space.

Notwithstanding the advantages of constructing ASFs to capture the closeness between the trajectories of two systems, one needs to know the precise model of the system at hand since its dynamic appears in \eqref{eq: con2-def}. However, in many practical scenarios, the exact mathematical model of the system is not available, demonstrating the demand to develop a data-driven approach. In this regard, a few \emph{scenario-based} data-driven frameworks have been proposed to construct ASFs (\emph{e.g.,} \cite{lavaei2022data,devonport2021symbolic,lavaei2023symbolic, samari2024data}).
While promising, all these studies often require gathering a vast number of i.i.d. samples—sometimes reaching millions—which limits their practical applicability in many scenarios. Note that without formally establishing ASFs, the constructed symbolic model would be ineffective, as its results cannot be transferred back to the original system. Motivated by these key challenges, we now introduce the problem we opt to address in this work.

\begin{resp}
	\begin{problem}\label{Prob}Consider a dt-IANSP $\Sigma$ with unknown matrices $A$ and $B$, and a given upper bound on the maximum degree of $\mathcal{M}(x)$. Develop a direct data-driven method to construct an ASF, ensuring its correctness guarantees by leveraging only two input-state trajectories. Following this, design a hybrid controller to enforce the desired behaviors over the unknown dt-IANSP, using the controller synthesized for its data-driven symbolic model.
	\end{problem}
\end{resp}

In order to address Problem \ref{Prob}, we introduce our data-driven framework in the subsequent section.

\section{Data-Driven Framework}\label{Data Framework}
This section introduces our proposed data-driven approach for constructing an ASF from the data-driven symbolic model $\hat{\Sigma}$ to the unknown dt-IANSP $\Sigma$. To achieve this, we first define the ASF in a \emph{quadratic} form as \(\pmb{\mathcal{S}}(x, \hat{x}) = (x - \hat{x})^\top \mathcal{P} (x - \hat{x})\), where \(\mathcal{P} \succ 0\). Next, assuming that all system states and inputs are measurable, we gather input-state data (\emph{i.e.,} observations) over a time horizon \(\left[0, \mathcal{T}\right]\), where \(\mathcal{T} \in \Np\) denotes the number of collected samples:
\begin{subequations}\label{eq: data1}
	\begin{align}
		\pmb{\mathcal{O}} & \coloneq \begin{bmatrix}
			x(0) & x(1) & \dots & x(\mathcal{T} - 1)
		\end{bmatrix} \in \R^{n \times \mathcal{T}}\! ,\\
		\pmb{\mathcal{I}} & \coloneq \begin{bmatrix}
			u(0) & u(1) & \dots & u(\mathcal{T} - 1)
		\end{bmatrix} \in \R^{m \times \mathcal{T}}\! ,\\
		\pmb{\mathcal{O}^+} & \coloneq \begin{bmatrix}
			x(1) & x(2) & \dots & x(\mathcal{T})
		\end{bmatrix} \in \R^{n \times \mathcal{T}}\!.
	\end{align}
\end{subequations}
We refer to \eqref{eq: data1} as a \emph{single input-state  trajectory} from the system. To construct an ASF within our framework, we collect another trajectory, this time from $A \mathcal{M}(\hat{x}) + B \hat{u}$ (referred to as the \emph{second trajectory}), similar to \eqref{eq: data1}, with the same number of samples $\mathcal{T}$. This second trajectory is denoted as $\pmb{\hat{\mathcal{O}}}, \pmb{\hat{\mathcal{I}}},$ and $\pmb{\hat{\mathcal{O}}^+}$, in which the initial condition $\hat{x}(0)$ can be selected from the discrete state set $\hat{X}$ and the control inputs from the discrete input set $\hat{U}$. Gathering the second trajectory is necessary within our framework as we ultimately desire to obtain the data-based representation of $A\mathcal{M}(x) + Bu - (A \mathcal{M}(\hat{x}) + B \hat{u})$ (cf. Lemma \ref{lemma}) to construct the ASF (cf. Theorem \ref{thm: main}).

It is worth noting that $A \mathcal{M}(\hat{x}) + B \hat{u}$, from which the second trajectory is collected, differs from the symbolic model dynamics in \eqref{symbolic dyn} since it does not involve the quantization map $\Pi$. Specifically, the incremental difference $A\mathcal{M}(x) + Bu - (A \mathcal{M}(\hat{x}) + B \hat{u})$ is required, as this work aims to construct a \emph{complete} abstraction in accordance with condition \eqref{eq: quant}. This relies on the original system satisfying the incremental input-to-state stability (delta-ISS) property (cf. \cite[Definition 4.1]{swikir2019compositional}), which is implicitly reflected in condition \eqref{eq: con2-def} with $\gamma\in(0,1)$. This enables us to construct an ASF from $\hat\Sigma$ to $\Sigma$ (\(\hat{\Sigma} \preceq_{{\pmb{\mathcal{S}}}} \! \Sigma\)) as well as from $\Sigma$ to $\hat \Sigma$ (\({\Sigma} \preceq_{{\pmb{\mathcal{S}}}} \! \hat\Sigma\)), \emph{i.e.,} thereby establishing an alternating \emph{bisimulation} function between $\Sigma$ and $\hat \Sigma$ (\({\Sigma} \cong_{\pmb{\mathcal{S}}} \!\hat \Sigma\))—see Theorem~\ref{thm: main} and Remark~\ref{Rem77}. It is worth noting that the delta-ISS nature of complete abstraction in our setting implies that the second trajectory can be even collected from any arbitrary states and inputs within the continuous domain.

Having been inspired by \cite{guo2021data}, in the following lemma, we raise the required conditions along with the proposed hybrid interface function to derive the closed-form data-based representation of $A\mathcal{M}(x) + Bu - (A \mathcal{M}(\hat{x}) + B \hat{u})$, which is subsequently utilized in the main results of the paper.

\begin{lemma}[Data-based Representation]\label{lemma}Assume $\pmb{\mathcal{G}_1}(x) \in \R^{\mathcal{T} \times n}$ and $\pmb{\mathcal{G}_2}(\hat x) \in \R^{\mathcal{T} \times n}$ are arbitrary matrix-valued functions such that
	\begin{subequations} \label{eq: Eq1}
		\begin{align}
			\Upsilon(x) &= \pmb{\mathds{M}} \pmb{\mathcal{G}_1}(x), \label{eq: cond_G1}\\
			\Upsilon(\hat x) &= \pmb{\hat{\mathds{M}}} \pmb{\mathcal{G}_2}(\hat x), \label{eq: cond_G2}
		\end{align}
	\end{subequations}
	where $\Upsilon(\cdot) \in \R^{M \times n}$ is a matrix-valued function satisfying the following conditions:
	\begin{subequations}\label{eq: Eq2}
		\begin{align}
			\mathcal{M}(x) &= \Upsilon(x) x, \label{eq: T1}\\
			\mathcal{M}(\hat x) &= \Upsilon(\hat x) \hat x. \label{eq: T2}
		\end{align}
	\end{subequations}
	Moreover, $\pmb{\mathds{M}}$ is a \emph{full row-rank} matrix constructed from the vector $\mathcal{M}(x)$ and sampled data $\pmb{\mathcal{O}}$ as
	\begin{align}
		\pmb{\mathds{M}} = \begin{bmatrix}
			\mathcal{M}(x(0)) & \mathcal{M}(x(1)) & \dots & \mathcal{M}(x(\mathcal{T} - 1))
		\end{bmatrix} \in \R^{M \times \mathcal{T}}\! , \label{eq: M of M}
	\end{align}
	and $\pmb{\hat{\mathds{M}}}$ is constructed analogously. If the hybrid interface function is designed as
	\begin{align}
		u = \pmb{\mathcal{I}}  \pmb{\mathcal{G}_1}(x) x - \pmb{\hat{\mathcal{I}}}  \pmb{\mathcal{G}_2}(\hat x)\hat x + \hat u, \label{eq: interface}
	\end{align}
	then the difference $A\mathcal{M}(x) + Bu - (A \mathcal{M}(\hat{x}) + B \hat{u})$ can be expressed in the data-based form
	\begin{align}
		A\mathcal{M}(x) + Bu - (A \mathcal{M}(\hat{x}) + B \hat{u}) = \pmb{\mathcal{O}^+} \pmb{\mathcal{G}_1}(x) x - \pmb{\hat{\mathcal{O}}^+} \pmb{\mathcal{G}_2}(\hat x) \hat{x}. \label{eq: closed}
	\end{align}
\end{lemma}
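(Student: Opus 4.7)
The plan is a direct algebraic verification, chaining together (i) the conditions imposed on $\pmb{\mathcal{G}_1}, \pmb{\mathcal{G}_2}$, (ii) the explicit form of the hybrid interface in \eqref{eq: interface}, and (iii) the data-dynamics identities that $\pmb{\mathcal{O}^+}$ and $\pmb{\hat{\mathcal{O}}^+}$ must satisfy by virtue of being sampled from $\Sigma$. The crucial preliminary observation is that, because every column of $\pmb{\mathcal{O}^+}$ is generated by \eqref{eq: dt-IANSP} from the corresponding columns of $\pmb{\mathds{M}}$ and $\pmb{\mathcal{I}}$, one has the compact data identity $\pmb{\mathcal{O}^+} = A\pmb{\mathds{M}} + B\pmb{\mathcal{I}}$, and analogously $\pmb{\hat{\mathcal{O}}^+} = A\pmb{\hat{\mathds{M}}} + B\pmb{\hat{\mathcal{I}}}$. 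These identities are what allow the unknown matrices $A$ and $B$ to be eliminated in favor of measured quantities.

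With that in hand, I would first substitute the interface \eqref{eq: interface} into the left-hand side of \eqref{eq: closed}, so that the $B\hat{u}$ terms cancel and we are left with
\begin{equation*}
A\mathcal{M}(x) - A\mathcal{M}(\hat x) + B\pmb{\mathcal{I}}\pmb{\mathcal{G}_1}(x) x - B\pmb{\hat{\mathcal{I}}}\pmb{\mathcal{G}_2}(\hat x)\hat x.
\end{equation*}
Next, I would rewrite the two monomial terms using \eqref{eq: T1}--\eqref{eq: T2} as $\mathcal{M}(x)=\Upsilon(x)x$ and $\mathcal{M}(\hat x)=\Upsilon(\hat x)\hat x$, and then invoke \eqref{eq: cond_G1}--\eqref{eq: cond_G2} to replace $\Upsilon(x)$ with $\pmb{\mathds{M}}\pmb{\mathcal{G}_1}(x)$ and $\Upsilon(\hat x)$ with $\pmb{\hat{\mathds{M}}}\pmb{\mathcal{G}_2}(\hat x)$. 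Factoring $\pmb{\mathcal{G}_1}(x) x$ and $\pmb{\mathcal{G}_2}(\hat x)\hat x$ on the right then groups the expression into
\begin{equation*}
\bigl(A\pmb{\mathds{M}} + B\pmb{\mathcal{I}}\bigr)\pmb{\mathcal{G}_1}(x)\,x \;-\; \bigl(A\pmb{\hat{\mathds{M}}} + B\pmb{\hat{\mathcal{I}}}\bigr)\pmb{\mathcal{G}_2}(\hat x)\,\hat x.
\end{equation*}
Applying the data-dynamics identities $A\pmb{\mathds{M}} + B\pmb{\mathcal{I}} = \pmb{\mathcal{O}^+}$ and $A\pmb{\hat{\mathds{M}}} + B\pmb{\hat{\mathcal{I}}} = \pmb{\hat{\mathcal{O}}^+}$ then yields \eqref{eq: closed} exactly.

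There is essentially no analytic obstacle; the argument is a one-line computation once the three ingredients are arranged. The only subtle point worth flagging is the role of the \emph{full row-rank} assumption on $\pmb{\mathds{M}}$ (and implicitly on $\pmb{\hat{\mathds{M}}}$): it is what makes the existence of matrix-valued $\pmb{\mathcal{G}_1}(x), \pmb{\mathcal{G}_2}(\hat x)$ solving \eqref{eq: cond_G1}--\eqref{eq: cond_G2} non-vacuous, i.e., this is the persistency-of-excitation / rank condition that ensures the collected data is rich enough to parameterize $\Upsilon(\cdot)$. In the proof itself this rank condition does not need to be invoked explicitly—the lemma simply presumes such $\pmb{\mathcal{G}_1}, \pmb{\mathcal{G}_2}$ are available—so the verification reduces to the clean substitution sketched above.
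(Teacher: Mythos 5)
Your proof is correct and follows essentially the same route as the paper: both hinge on the data identities $\pmb{\mathcal{O}^+} = A\pmb{\mathds{M}} + B\pmb{\mathcal{I}}$ and $\pmb{\hat{\mathcal{O}}^+} = A\pmb{\hat{\mathds{M}}} + B\pmb{\hat{\mathcal{I}}}$, then chain the substitutions \eqref{eq: Eq2}, \eqref{eq: interface}, and \eqref{eq: Eq1} to eliminate $A$ and $B$ (the paper merely applies \eqref{eq: Eq2} before the interface rather than after, which is immaterial). Your closing remark on the full row-rank condition guaranteeing existence of $\pmb{\mathcal{G}_1}(x), \pmb{\mathcal{G}_2}(\hat x)$, while not needed in the verification itself, matches the paper's own Remark on rank conditions.
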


\begin{proof}
	We start with the fact that according to the data collected in \eqref{eq: data1} and the matrix $\pmb{\mathds{M}}$, one has
	\begin{align}
		\pmb{\mathcal{O}^+} = A \pmb{\mathds{M}} + B \pmb{\mathcal{I}}. \label{tmp1}
	\end{align}
	Likewise, we can show that $\pmb{\hat{\mathcal{O}}^+} = A \pmb{\hat{\mathds{M}}} + B \pmb{\hat{\mathcal{I}}}$. Now, one can derive the following chain of equalities:
	\begin{align*}
		A\mathcal{M}(x) \!+\! Bu \!-\! (A \mathcal{M}(\hat{x}) \!+\! B \hat{u}) &\overset{\eqref{eq: Eq2}}{=} A \Upsilon(x) x + Bu - (A \Upsilon(\hat x) \hat x + B \hat u)\\
		& \! \overset{\eqref{eq: interface}}{=} A \Upsilon(x) x + B \pmb{\mathcal{I}}  \pmb{\mathcal{G}_1}(x) x  - B \pmb{\hat{\mathcal{I}}}  \pmb{\mathcal{G}_2}(\hat x)\hat x + B \hat u - (A \Upsilon(\hat x) \hat x \!+\! B \hat u)\\
		&\hspace{0.1cm} = A \Upsilon(x) x + B \pmb{\mathcal{I}}  \pmb{\mathcal{G}_1}(x) x - (A \Upsilon(\hat x) \hat x + B \pmb{\hat{\mathcal{I}}}  \pmb{\mathcal{G}_2}(\hat x)\hat x)\\
		& \overset{\eqref{eq: Eq1}}{=} (A \pmb{\mathds{M}} + B \pmb{\mathcal{I}}) \pmb{\mathcal{G}_1}(x) x - (A \pmb{\hat{\mathds{M}}} + B \pmb{\hat{\mathcal{I}}}) \pmb{\mathcal{G}_2}(\hat x)\hat x\\
		& \! \overset{\eqref{tmp1}}{=} \pmb{\mathcal{O}^+} \pmb{\mathcal{G}_1}(x) x - \pmb{\hat{\mathcal{O}}^+} \pmb{\mathcal{G}_2}(\hat x)\hat x,
	\end{align*}
	which concludes the proof.
\end{proof}

\begin{remark}[Rank Conditions]\label{Remark-richness}If the matrices $\pmb{\mathds{M}}$ and $\pmb{\hat{\mathds{M}}}$ possess full-row rank, this ensures the existence of matrices $\pmb{\mathcal{G}_1}(x)$ and $\pmb{\mathcal{G}_2}(\hat x)$ that satisfy \eqref{eq: Eq1}. To fulfill this condition, it is necessary to collect at least $M + 1$ data points, implying that $\mathcal{T}$ must satisfy $\mathcal{T} \geq M + 1$. Note that this requirement is readily verifiable since matrices $\pmb{\mathds{M}}$ and $\pmb{\hat{\mathds{M}}}$ are constructed from the gathered data.
\end{remark}

\begin{remark}[Transformation Matrices]Since $\mathcal{M}(\mathbf{0}_n) = \mathbf{0}_M$, it is evident that, without loss of generality, transformation matrices $\Upsilon(x)$ and $\Upsilon(\hat x)$ can always be found to satisfy conditions~\eqref{eq: Eq2}. This approach facilitates our framework by representing everything in terms of $x$ and $\hat{x}$ instead of $\mathcal{M}(x)$ and $\mathcal{M}(\hat{x})$, aligning with the structure of our ASF  $\pmb{\mathcal{S}}(x, \hat{x}) = (x - \hat{x})^\top \mathcal{P} (x - \hat{x})$.
\end{remark}

With the data-based representation of $A\mathcal{M}(x) + Bu - (A \mathcal{M}(\hat{x}) + B \hat{u})$ established in Lemma \ref{lemma}, we now present the following theorem as the main contribution of this work. This theorem enables the design of an ASF from the data-driven symbolic model $\hat{\Sigma}$ to the original dt-IANSP $\Sigma$, relying solely on two trajectories from the unknown system.

\begin{theorem}[Data-Driven ASFs]\label{thm: main}Consider an unknown dt-IANSP $\Sigma$, defined in Definition \ref{def: sys}, its data-driven symbolic model $\hat{\Sigma}$, defined in Subsection \ref{data-symbolic}, and the data-based representation of $A\mathcal{M}(x) + Bu - A \mathcal{M}(\hat{x}) - B \hat{u}$, as in \eqref{eq: closed}, proposed in Lemma \ref{lemma}. Now, if there exist matrix-valued functions $\pmb{\mathcal{Y}_1}(x) \in \R^{\mathcal{T} \times n}$ and $\pmb{\mathcal{Y}_2}(\hat x) \in \R^{\mathcal{T} \times n}$, along with constant matrices $\Xi \in \R^{n \times n}$ and $\Theta \in \R^{n \times n},$ with $\Xi \succ 0,$ satisfying
	\begin{subequations}\label{eq: thm}
		\begin{align}
			&\pmb{\mathds{M}}  \pmb{\mathcal{Y}_1}(x) = \Upsilon(x) \Xi, \label{eq: thm1}\\
			& \pmb{\hat{\mathds{M}}} \pmb{\mathcal{Y}_2}(\hat x) =  \Upsilon(\hat x) \Xi, \label{eq: thm2}\\
			& \pmb{\mathcal{O}^+} \pmb{\mathcal{Y}_1}(x) = \Theta, \label{eq: thm3}\\
			& \pmb{\hat{\mathcal{O}}^+}  \pmb{\mathcal{Y}_2}(\hat x) = \Theta, \label{eq: thm4}\\
			& \begin{bmatrix}
				(\tfrac{1}{1 + \mu})\Xi & \Theta\\
				\star & \gamma\Xi
			\end{bmatrix} \succeq 0, \label{eq: thm5}
		\end{align}
	\end{subequations}
	for some $\mu \in \Rp$ and \(\gamma \in (0,1)\), then one can conclude that $\pmb{\mathcal{S}}(x, \hat x) = (x - \hat{x})^\top \mathcal{P} (x - \hat{x}),$ with $\mathcal{P} \coloneq \Xi^{-1},$ is an ASF from the data-driven symbolic model $\hat{\Sigma}$ to the unknown dt-IANSP $\Sigma$, with $\alpha = \lambda_{\min}(\mathcal{P}), \rho \equiv 0,$ and $\psi = (1 + \frac{1}{\mu}) \Vert \sqrt{\mathcal{P}} \Vert^2 \delta^2.$ Furthermore, the interface function is designed as $u =  \pmb{\mathcal{I}}  \pmb{\mathcal{Y}_1}(x) \mathcal{P} x - \pmb{\hat{\mathcal{I}}}  \pmb{\mathcal{Y}_2}(\hat x) \mathcal{P} \hat x + \hat u.$
\end{theorem}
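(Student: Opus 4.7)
The plan is to verify the two conditions of Definition~\ref{def: ASF} for the candidate $\pmb{\mathcal{S}}(x,\hat x)=(x-\hat x)^\top\mathcal{P}(x-\hat x)$ with $\mathcal{P}\coloneq\Xi^{-1}$. Condition~\eqref{eq: con1-def} is immediate: since $\mathcal{P}\succ 0$, one has $\pmb{\mathcal{S}}(x,\hat x)\geq\lambda_{\min}(\mathcal{P})\|x-\hat x\|^2$, yielding $\alpha=\lambda_{\min}(\mathcal{P})$.

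The real work lies in \eqref{eq: con2-def}. I would first invoke Lemma~\ref{lemma} with the specific choice $\pmb{\mathcal{G}_1}(x)\coloneq\pmb{\mathcal{Y}_1}(x)\mathcal{P}$ and $\pmb{\mathcal{G}_2}(\hat x)\coloneq\pmb{\mathcal{Y}_2}(\hat x)\mathcal{P}$. Right-multiplying \eqref{eq: thm1}-\eqref{eq: thm2} by $\mathcal{P}=\Xi^{-1}$ confirms that these $\pmb{\mathcal{G}_i}$ fulfil \eqref{eq: cond_G1}-\eqref{eq: cond_G2}, so that the interface prescribed in \eqref{eq: interface} coincides exactly with the $u$ claimed by the theorem. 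Consequently, \eqref{eq: closed} gives $A\mathcal{M}(x)+Bu-A\mathcal{M}(\hat x)-B\hat u=\pmb{\mathcal{O}^+}\pmb{\mathcal{Y}_1}(x)\mathcal{P}x-\pmb{\hat{\mathcal{O}}^+}\pmb{\mathcal{Y}_2}(\hat x)\mathcal{P}\hat x$, and by \eqref{eq: thm3}-\eqref{eq: thm4} the right-hand side collapses to $\Theta\mathcal{P}(x-\hat x)$. This collapse is the crucial algebraic step: the data-based closed form ends up affine in $(x-\hat x)$ with a \emph{constant} gain $\Theta\mathcal{P}$, which is what allows a single LMI to control the contraction.

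Next, I would absorb the quantization map. Writing $\varepsilon\coloneq\Pi(A\mathcal{M}(\hat x)+B\hat u)-(A\mathcal{M}(\hat x)+B\hat u)$, condition~\eqref{eq: quant} gives $\|\varepsilon\|\leq\delta$, so $x^+-\hat x^+=\Theta\mathcal{P}(x-\hat x)-\varepsilon$. Applying the weighted Young's inequality $(a+b)^\top\mathcal{P}(a+b)\leq(1+\mu)a^\top\mathcal{P}a+(1+\tfrac{1}{\mu})b^\top\mathcal{P}b$ to $\pmb{\mathcal{S}}(x^+,\hat x^+)$ separates the contractive part from the quantization residue. The residue is bounded via $\varepsilon^\top\mathcal{P}\varepsilon\leq\|\sqrt{\mathcal{P}}\|^2\delta^2$, producing $\psi=(1+\tfrac{1}{\mu})\|\sqrt{\mathcal{P}}\|^2\delta^2$, while $\rho\equiv 0$ because no $\|\hat u\|$ term arises.

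The final step is to bound $(1+\mu)(x-\hat x)^\top\mathcal{P}\Theta^\top\mathcal{P}\Theta\mathcal{P}(x-\hat x)\leq\gamma(x-\hat x)^\top\mathcal{P}(x-\hat x)$. Pre- and post-multiplying the underlying matrix inequality by $\Xi$ turns it into $(1+\mu)\Theta^\top\mathcal{P}\Theta\preceq\gamma\Xi$, which, since $\Xi\succ 0$, is precisely the Schur complement of \eqref{eq: thm5}. The main obstacle, and the reason the theorem's formulation is non-obvious, is exactly this congruence transformation combined with the change of variables $\pmb{\mathcal{G}_i}=\pmb{\mathcal{Y}_i}\mathcal{P}$: together they linearise what is otherwise a bilinear condition in $\mathcal{P}$ and $\Theta$, letting the data matrices enter affinely through \eqref{eq: thm1}-\eqref{eq: thm4} so the whole constraint set is tractable as an SOS program.
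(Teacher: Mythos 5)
Your proposal is correct and follows essentially the same route as the paper's proof: the change of variables $\pmb{\mathcal{G}_i}=\pmb{\mathcal{Y}_i}\mathcal{P}$ to invoke Lemma~\ref{lemma}, the collapse of the data-based difference to $\Theta\mathcal{P}(x-\hat x)$ via \eqref{eq: thm3}--\eqref{eq: thm4}, the quantization residue bounded through \eqref{eq: quant} yielding the same $\psi$ and $\rho\equiv 0$, and the Schur complement of \eqref{eq: thm5} combined with a congruence by $\Xi$ (equivalently $\mathcal{P}$) to obtain the $\gamma$-contraction. The only cosmetic difference is that you apply the weighted Young's inequality $(a+b)^\top\mathcal{P}(a+b)\leq(1+\mu)a^\top\mathcal{P}a+(1+\tfrac{1}{\mu})b^\top\mathcal{P}b$ directly, whereas the paper splits the cross term via Cauchy--Schwarz followed by scalar Young's inequality, which gives the identical bound.
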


\begin{proof}
	Initially, let us define 
	\begin{align}\label{New9}
		\pmb{\mathcal{G}_1}(x) \coloneq \pmb{\mathcal{Y}_1}(x) \mathcal{P}, \quad\text{and} \quad \pmb{\mathcal{G}_2}(\hat x) \coloneq \pmb{\mathcal{Y}_2}(\hat x) \mathcal{P}.
	\end{align}
	Since $\mathcal{P} = \Xi^{-1},$ we consequently get $\pmb{\mathcal{G}_1}(x) \coloneq \pmb{\mathcal{Y}_1}(x) \Xi^{-1}$ and $\pmb{\mathcal{G}_2}(\hat x) \coloneq \pmb{\mathcal{Y}_2}(\hat x) \Xi^{-1}.$ Thus, the satisfaction of \eqref{eq: cond_G1} and \eqref{eq: cond_G2} directly follows from the fulfillment of \eqref{eq: thm1} and \eqref{eq: thm2}, respectively.
	
	Now, we proceed with demonstrating the fulfillment of condition \eqref{eq: con2-def} under the satisfaction of conditions \eqref{eq: thm1}-\eqref{eq: thm5}. To do so, by considering $\pmb{\mathcal{S}}(x, \hat x) = (x - \hat{x})^\top \mathcal{P} (x - \hat{x}),$ one has
	\begin{align*}
		&\pmb{\mathcal{S}}\big(A\mathcal{M}(x) + Bu, \Pi(A\mathcal{M}(\hat{x}) + B\hat{u})\big)\\
		& = \big(\underbrace{A\mathcal{M}(x) + Bu - \Pi(A\mathcal{M}(\hat{x}) + B\hat{u})}_\clubsuit\big)^{\!\! \top} \mathcal{P} \big(A\mathcal{M}(x) + Bu - \Pi(A\mathcal{M}(\hat{x}) + B\hat{u})\big).
	\end{align*}
	We first aim to compute a closed-form data-based representation for ``$\clubsuit$'' as follows:
	\begin{align*}
		&A\mathcal{M}(x) + Bu - \Pi(A\mathcal{M}(\hat{x}) + B\hat{u})\\
		& \! \overset{\eqref{eq: interface}}{=} A\mathcal{M}(x) + B \pmb{\mathcal{I}}  \pmb{\mathcal{G}_1}(x) x - B \pmb{\hat{\mathcal{I}}}  \pmb{\mathcal{G}_2}(\hat x)\hat x + B \hat u - \Pi(A\mathcal{M}(\hat{x}) + B\hat{u}).
	\end{align*}
	Now, by incorporating the term $A\mathcal{M}(\hat{x})$ through \emph{addition and subtraction}, we have
	\begin{align*}
		&A\mathcal{M}(x) + Bu - \Pi(A\mathcal{M}(\hat{x}) + B\hat{u})\\
		& =  A\mathcal{M}({x}) + B \pmb{\mathcal{I}}  \pmb{\mathcal{G}_1}(x) x - A\mathcal{M}(\hat{x}) - B \pmb{\hat{\mathcal{I}}}  \pmb{\mathcal{G}_2}(\hat x)\hat x + A\mathcal{M}(\hat{x}) + B \hat u - \Pi(A\mathcal{M}(\hat{x}) + B\hat{u}). 
	\end{align*}
	Subsequently, according to \eqref{eq: closed} in Lemma \ref{lemma}, we have
	\begin{align*}
		&A\mathcal{M}(x) + Bu - \Pi(A\mathcal{M}(\hat{x}) + B\hat{u})\\
		& \overset{\eqref{eq: closed}}{=}  \pmb{\mathcal{O}^+} \pmb{\mathcal{G}_1}(x) x - \pmb{\hat{\mathcal{O}}^+} \pmb{\mathcal{G}_2}(\hat x)\hat x + A\mathcal{M}(\hat{x})+ B \hat u - \Pi(A\mathcal{M}(\hat{x})+ B\hat{u})\\
		& \overset{\eqref{New9}}{=}   \pmb{\mathcal{O}^+} \pmb{\mathcal{Y}_1}(x) \mathcal{P} x - \pmb{\hat{\mathcal{O}}^+} \pmb{\mathcal{Y}_2}(\hat x) \mathcal{P} \hat x + A\mathcal{M}(\hat{x}) + B \hat u - \Pi(A\mathcal{M}(\hat{x}) + B\hat{u}).
	\end{align*}
	Now, according to conditions \eqref{eq: thm3} and \eqref{eq: thm4}, one has
	\begin{align}\notag
		&A\mathcal{M}(x) + Bu - \Pi(A\mathcal{M}(\hat{x}) + B\hat{u})\\\label{new6}
		& = \Theta \mathcal{P} (x- \hat x)  + A\mathcal{M}(\hat{x})+ B \hat u - \Pi(A\mathcal{M}(\hat{x}) + B\hat{u}).
	\end{align}
	Then by having the closed-form data-based representation of ``$\clubsuit$'' in \eqref{new6}, we have 
	\begin{align}\notag
		&\pmb{\mathcal{S}}\big(A\mathcal{M}(x) + Bu, \Pi(A\mathcal{M}(\hat{x}) + B\hat{u})\big)\\\notag
		& = \big(\Theta \mathcal{P} (x- \hat x)  + A\mathcal{M}(\hat{x}) + B \hat u - \Pi(A\mathcal{M}(\hat{x}) + B\hat{u})\big)^{\!\! \top}\mathcal{P} \big(\Theta \mathcal{P} (x- \hat x)  + A\mathcal{M}(\hat{x}) + B \hat u \\\notag
		& \hspace{0.7cm} - \Pi(A\mathcal{M}(\hat{x}) + B\hat{u})\big)\\\notag
		& = (x - \hat x)^{\! \top} \mathcal{P} \Theta^{\! \top}  \mathcal{P} \Theta \mathcal{P} (x - \hat x) + 2 \overbrace{(x - \hat x)^{\! \top} \mathcal{P} \Theta^{\! \top} \sqrt{\mathcal{P}}}^{a} \overbrace{\sqrt{\mathcal{P}} (A\mathcal{M}(\hat{x}) + B \hat u - \Pi(A\mathcal{M}(\hat{x}) + B\hat{u}))}^{b}\\\label{New62}
		& \hspace{0.3cm} + (A\mathcal{M}(\hat{x}) + B \hat u - \Pi(A\mathcal{M}(\hat{x}) + B\hat{u}))^{\! \top} \mathcal{P} (A\mathcal{M}(\hat{x}) + B \hat u - \Pi(A\mathcal{M}(\hat{x})+ B\hat{u})).
	\end{align}
	According to the Cauchy-Schwarz inequality \cite{bhatia1995cauchy}, \emph{i.e.,}  $a b \leq \Vert a \Vert \Vert b \Vert,$ for any $a^\top\!, b \in \R^{n}$, followed by
	employing Young's inequality \cite{young1912classes}, \emph{i.e.,} $\Vert a \Vert \Vert b \Vert \leq \frac{\mu}{2} \Vert a \Vert^2 + \frac{1}{2\mu} \Vert b \Vert^2$, for any $\mu \in \Rp$, one has
	\begin{align}\notag
		& 2 (x - \hat x)^{\! \top} \mathcal{P} \Theta^{\! \top} \sqrt{\mathcal{P}} \sqrt{\mathcal{P}} (A\mathcal{M}(\hat{x})+ B \hat u - \Pi(A\mathcal{M}(\hat{x}) + B\hat{u}))\\\notag
		& \leq \mu (x - \hat x)^{\! \top} \mathcal{P} \Theta^{\! \top}  \mathcal{P} \Theta \mathcal{P} (x - \hat x) + \tfrac{1}{\mu} (A\mathcal{M}(\hat{x}) + B \hat u - \Pi(A\mathcal{M}(\hat{x}) + B\hat{u}))^{\! \top} \mathcal{P} (A\mathcal{M}(\hat{x}) + B \hat u\\\label{New6}
		& \hspace{0.3cm} - \Pi(A\mathcal{M}(\hat{x}) + B\hat{u})).
	\end{align}
	Likewise, using Cauchy-Schwarz inequality, one can conclude that
	\begin{align}\notag
		&(A\mathcal{M}(\hat{x}) + B \hat u - \Pi(A\mathcal{M}(\hat{x}) + B\hat{u}))^{\! \top} \mathcal{P} (A\mathcal{M}(\hat{x}) + B \hat u - \Pi(A\mathcal{M}(\hat{x}) + B\hat{u}))\\\label{New61}
		&\leq \Vert \sqrt{\mathcal{P}} \Vert ^2 \Vert (A\mathcal{M}(\hat{x}) + B \hat u - \Pi(A\mathcal{M}(\hat{x}) + B\hat{u})) \Vert^2 \overset{\eqref{eq: quant}}{\leq} \Vert \sqrt{\mathcal{P}} \Vert ^2 \delta^2.
	\end{align}
	Hence, by applying the bounds \eqref{New6} and \eqref{New61} to \eqref{New62}, one has
	\begin{align*}
		\pmb{\mathcal{S}}\big(A\mathcal{M}(x) + Bu, \Pi(A\mathcal{M}(\hat{x}) + B\hat{u})\big) \leq (1 + \mu) (x - \hat x)^{\! \top} \mathcal{P} \Theta^{\! \top}  \mathcal{P} \Theta \mathcal{P} (x - \hat x) + (1 + \tfrac{1}{\mu}) \Vert \sqrt{\mathcal{P}} \Vert ^2 \delta^2.
	\end{align*}
	According to the Schur complement \cite{zhang2006schur}, and considering  condition \eqref{eq: thm5} with $\mathcal{P} = \Xi^{-1}$, it is well-defined that
	\begin{align*}
		\begin{bmatrix}
			(\tfrac{1}{1 + \mu})\mathcal{P}^{-1} & \Theta\\
			\star & \gamma \mathcal{P}^{-1}
		\end{bmatrix} \succeq 0& \Leftrightarrow \gamma \mathcal{P}^{-1} -(1 + \mu) \Theta^{\! \top} \mathcal{P} \Theta \succeq 0 \Leftrightarrow \gamma \underbrace{\mathcal{P}\mathcal{P}^{-1}}_{\I_n}\mathcal{P} -  (1 + \mu)\mathcal{P} \Theta^{\! \top} \mathcal{P} \Theta  \mathcal{P} \succeq 0.
	\end{align*}
	Thus, it is clear that
	\begin{align*}
		(1 + \mu) (x - \hat x)^{\! \top} \mathcal{P} \Theta^{\! \top}  \mathcal{P} \Theta \mathcal{P} (x - \hat x) \leq \gamma (x - \hat x)^{\! \top} \mathcal{P} (x - \hat x).
	\end{align*}
	Hence, one can deduce that 
	\begin{align*}
		\pmb{\mathcal{S}}\big(A\mathcal{M}(x) + Bu, \Pi(A\mathcal{M}(\hat{x}) + B\hat{u})\big) \leq \gamma \underbrace{(x - \hat x)^{\! \top} \mathcal{P} (x - \hat x)}_{\pmb{\mathcal{S}}(x, \hat{x})} + \underbrace{(1 + \tfrac{1}{\mu}) \Vert \sqrt{\mathcal{P}} \Vert ^2 \delta^2}_\psi,
	\end{align*}
	implying that condition \eqref{eq: con2-def} is satisfied under the fulfillment of conditions \eqref{eq: thm1}-\eqref{eq: thm5}, with $ \rho \equiv 0$ and $\psi = (1 + \frac{1}{\mu}) \Vert \sqrt{\mathcal{P}} \Vert^2 \delta^2.$ To complete the proof, we note that
	\begin{align*}
		\lambda_{\min}(\mathcal{P})  \|x - \hat{x}\|^2 \leq \pmb{\mathcal{S}}(x, \hat{x}) \leq \lambda_{\max}(\mathcal{P})  \|x - \hat{x}\|^2,
	\end{align*}
	and thus, condition \eqref{eq: con1-def} is satisfied with $\alpha =  \lambda_{\min}(\mathcal{P}),$ thereby concluding the proof.
\end{proof}

\begin{remark}[Enforcement of Conditions  \eqref{eq: thm1}-\eqref{eq: thm5}]To efficiently impose the conditions  \eqref{eq: thm1}-\eqref{eq: thm5}, one can leverage existing software tools such as \textsf{SOSTOOLS} \cite{prajna2004sostools} for fulfilling conditions~\eqref{eq: thm1}-\eqref{eq: thm4} in conjunction with semi-definite programming solvers like \textsf{SeDuMi} \cite{sturm1999using} for solving condition~\eqref{eq: thm5}.
\end{remark}

\begin{remark}[ASF from $\Sigma$ to $\hat \Sigma$]\label{Rem77}While the results of Theorem \ref{thm: main} construct an ASF from $\hat\Sigma$ to $\Sigma$ (\(\hat{\Sigma} \preceq_{{\pmb{\mathcal{S}}}} \! \Sigma\)), one can readily show that the proposed $\pmb{\mathcal{S}}(x, \hat{x})$ is also an ASF from $\Sigma$ to $\hat \Sigma$ (\({\Sigma} \preceq_{{\pmb{\mathcal{S}}}} \! \hat\Sigma\)), thereby establishing an alternating \emph{bisimulation} function between $\Sigma$ and $\hat \Sigma$ (\({\Sigma} \cong_{{\pmb{\mathcal{S}}}} \!\hat \Sigma\)). In particular, by the definition of $\hat U$, for any $u =  \pmb{\mathcal{I}}  \pmb{\mathcal{Y}_1}(x) \mathcal{P} x - \pmb{\hat{\mathcal{I}}}  \pmb{\mathcal{Y}_2}(\hat x) \mathcal{P} \hat x + \tilde u \in U$ there always exists $\hat u \in \hat U$ such that $\Vert\tilde u - \hat u \Vert\leq  \beta_1$, with $\beta_1$ being the input discretization parameter. If one follows the proof of Theorem \ref{thm: main} with the new $u =  \pmb{\mathcal{I}}  \pmb{\mathcal{Y}_1}(x) \mathcal{P} x - \pmb{\hat{\mathcal{I}}}  \pmb{\mathcal{Y}_2}(\hat x) \mathcal{P} \hat x + \tilde u$, and through addition and subtraction of $A\mathcal{M}(\hat{x}) + B\hat u$ (instead of $A\mathcal{M}(\hat{x})$),  all proof steps remain valid under the satisfaction of conditions \eqref{eq: thm1}-\eqref{eq: thm5}, with the only modification being the parameter $\psi$, now constructed as $\psi = (1 + \frac{1}{\mu}) \Vert \sqrt{\mathcal{P}} \Vert^2 (\beta_1^2\beta_2^2 + \delta^2)$, with $\beta_2$ being an upper bound on the induced $2$-norm of $B$ (cf. \cite[Theorem 4.8]{swikir2019compositional} for the model-based analysis).
\end{remark}

One can use the data-driven symbolic models  to synthesize discrete controllers $\hat u$ capable of satisfying complex logic properties via formal methods tools, such as SCOTS \cite{rungger2016scots}.
Leveraging the ASF, established by Theorem~\ref{thm: main}, and its corresponding data-driven interface map, expressed as
$u =  \pmb{\mathcal{I}}  \pmb{\mathcal{Y}_1}(x) \mathcal{P} x - \pmb{\hat{\mathcal{I}}}  \pmb{\mathcal{Y}_2}(\hat x) \mathcal{P} \hat x + \hat u$,
the hybrid controller $u$ can be effectively applied to the unknown dt-IANSP $\Sigma$. This ensures that the state trajectories of the two systems remain closely aligned, as guaranteed by the results of Theorem~\ref{thm: model-based}.

\section{Simulation Results}\label{Simul}
In this section, we evaluate the efficacy of our data-driven framework by constructing an ASF and a symbolic model using data, while enforcing two types of specifications over an unknown dt-IANSP: (i) safety and (ii) reach-while-avoid. All simulations were performed in Matlab on a MacBook Pro (Apple M2 Max) with 32GB memory.

Consider the following dt-IANSP \cite{guo2020learning}
\begin{align}
	\Sigma\!: \begin{cases}
		x_1^+ = x_1 + \tau x_2,\\
		x_2^+ = x_2 + \tau (x_1^2 + u),
	\end{cases}\label{sys_org}
\end{align}
where $\tau = 0.2$ is the sampling time. We note that the dt-IANSP \eqref{sys_org} is in the form of \eqref{eq: dt-IANSP}, with
\begin{align*}
	A = \begin{bmatrix}
		1 & \tau & 0\\
		0 & 1 & \tau
	\end{bmatrix}\!\!, \quad \mathcal{M}(x) = \begin{bmatrix}
		x_1 & x_2 & x_1^2
	\end{bmatrix}^{\! \top}\!\!\!\!, \quad B = \begin{bmatrix}
		0\\
		\tau
	\end{bmatrix}\!\!.
\end{align*}
We stress that matrices $A$ and $B$, along with the vector of monomials $\mathcal{M}(x)$, are all unknown. However, we assume that the maximum degree of the vector of monomials, $\mathcal{M}(x)$, is given as $2$. Based on this, we construct the extended dictionary $\mathcal{M}(x) = \begin{bmatrix}
	x_1 & x_2 & x_1^2 & x_1 x_2 & x_2^2
\end{bmatrix}^{\! \top}$\!\!\!.

The main goal is to collect \emph{two input-state trajectories} from the unknown system to design an ASF and a \emph{hybrid} interface map based on the result of Theorem \ref{thm: main}. Subsequently, as the first safety specification, the data-driven symbolic model is employed to design a discrete controller within the input set $U = [-2.5, 2.5]$, ensuring that the system's states remain within a predefined safe set $X = [-0.5, \: 0.5] \times [-0.5, \: 0.5]$ with correctness guarantees. To do so, we collect two sets of trajectories from the unknown dt-IANSP \eqref{sys_org}, as in \eqref{eq: data1}, over the time horizon $\mathcal{T} = 9$. Moreover, we set $\mu = 0.01, \delta = 0.001,$ and $\gamma = 0.99$. By solving conditions \eqref{eq: thm1}-\eqref{eq: thm5}, we design
\begin{align*}
	\mathcal{P} = \begin{bmatrix}
		5.5818  &  1.6117\\
		1.6117  &  6.1267
	\end{bmatrix}\!\!, \quad \Theta = \begin{bmatrix}
		0.1837 &  -0.0157\\
		-0.0487 &   0.0135
	\end{bmatrix}\!\!,
\end{align*}
and the hybrid interface map 
\begin{align}\label{int}
	u = -x_1^2 - 1.2495x_1 - 4.9773x_2 + \hat x_1^2 + 1.2495\hat x_1 + 4.9773\hat x_2 + \hat{u}.
\end{align}
Furthermore, based on the designed values, we compute $\psi = 0.0014$ and $\alpha = 4.2197$. Finally, with $\eta_1 = 0.99$, we compute $\bar{\psi} = 0.1414$ and accordingly $\epsilon = 0.1831$ based on Remark~\ref{remark-rho}, which formally quantifies the closeness between the state trajectories of the data-driven symbolic model and the unknown dt-IANSP.

\begin{figure}[t!]
	\centering
	\subfloat[Trajectories of the original system and its data-driven symbolic model]{
		\includegraphics[width=0.41\textwidth]{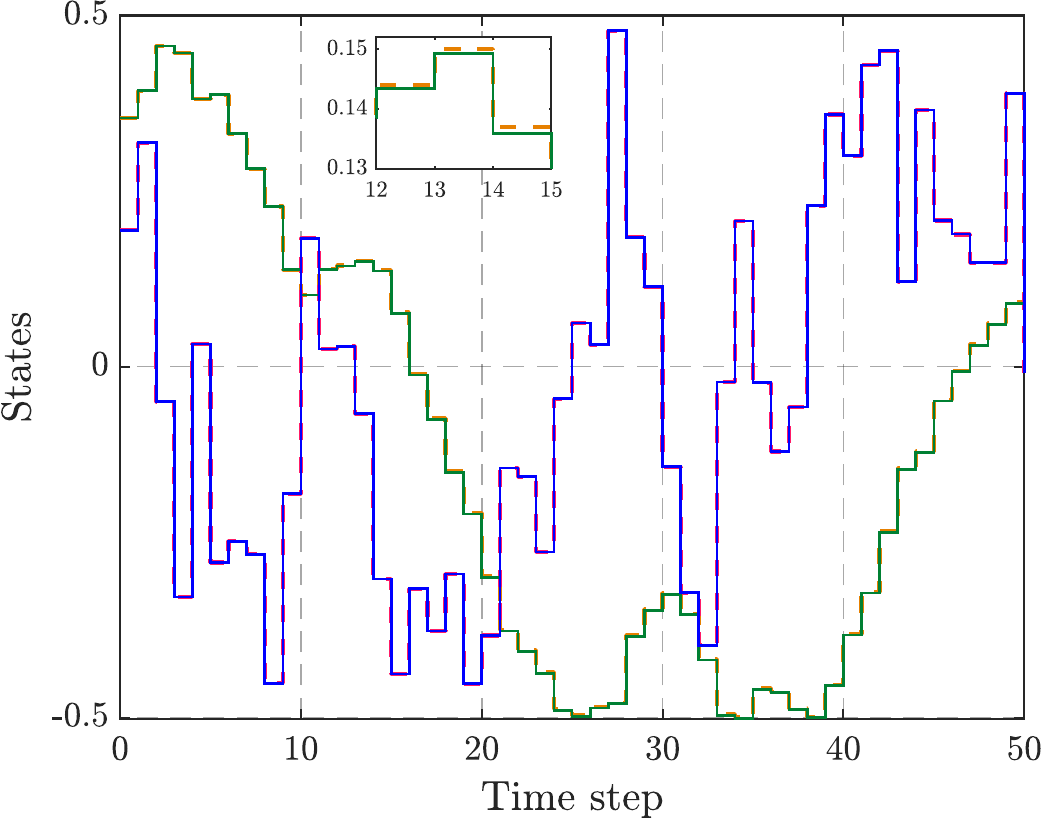}
		\label{fig:sub1}
	}
	\hfill
	\subfloat[Norm of the error between two systems' trajectories]{
		\includegraphics[width=0.41\textwidth]{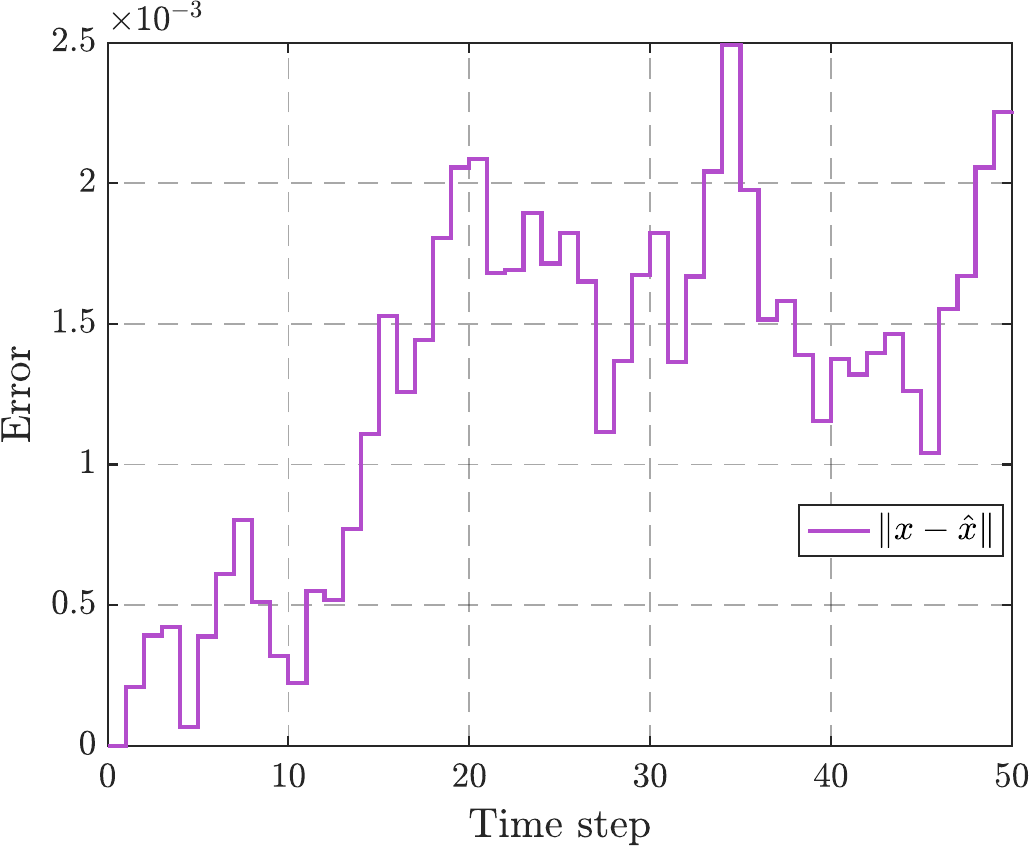}
		\label{fig:sub2}
	}
	\caption{The left-hand side subfigure illustrates the trajectories of \(x_1\) (\sampleline{x1, thick}), \(x_2\) (\sampleline{x2, thick}), \(\hat{x}_1\) (\sampleline{dashed, xh1, thick}), and \(\hat{x}_2\) (\sampleline{dashed, xh2, thick}). As shown, it is clear that the safety property is fulfilled. Moreover, the right-hand side subfigure depicts the norm of the error between the trajectories of the original system and those of its symbolic model. As illustrated, the error is always lower than the $\epsilon$ reported.}
	\label{fig:main}
\end{figure}

Using the constructed data-driven symbolic model, we proceed to design a controller for the dt-IANSP \eqref{sys_org} that ensures the state variables remain within the safe set $X = [-0.5, \: 0.5] \times [-0.5, \: 0.5]$. To achieve this, we first synthesize a discrete controller for the data-driven symbolic model using SCOTS \cite{rungger2016scots}. Subsequently, this controller is refined back over the unknown dt-IANSP using the data-driven interface map designed in~\eqref{int}. The closed-loop state trajectories of the unknown system using the hybrid interface map are depicted in Figure~\ref{fig:sub1}, while the error between the state trajectories of the unknown system and its symbolic model is shown in Figure~\ref{fig:sub2}. As shown, the synthesized hybrid controller successfully enforces the state variables of the unknown dt-IANSP \eqref{sys_org} to remain within the safe set $X = [-0.5, \: 0.5] \times [-0.5, \: 0.5]$.

To further demonstrate the efficacy of our framework, we opt to enforce a \emph{reach-while-avoid} specification over the unknown system \eqref{sys_org} without redesigning the interface function \eqref{int}. This highlights that once the ASF, interface function, and data-driven symbolic model are established, enforcing different specifications on the unknown system becomes readily achievable. The \emph{reach-while-avoid} specification requires the system's position to start within the initial set $[-1, -0.6] \times [-1, -0.5]$, reach the target set $[0.7, 1] \times [0.7, 1]$, while avoiding the obstacle $[-0.5, 0.5] \times [-1, 0.5]$. To achieve this, we utilize the constructed data-driven symbolic model and synthesize a discrete controller employing SCOTS, followed by refining it back over the unknown system leveraging the hybrid interface function \eqref{int}. Figure~\ref{fig:Rwa} provides the simulation result, which shows the practicality of our framework for enforcing such a complex specification.

\begin{figure}[t!]
	\centering
	\includegraphics[width=0.42\linewidth]{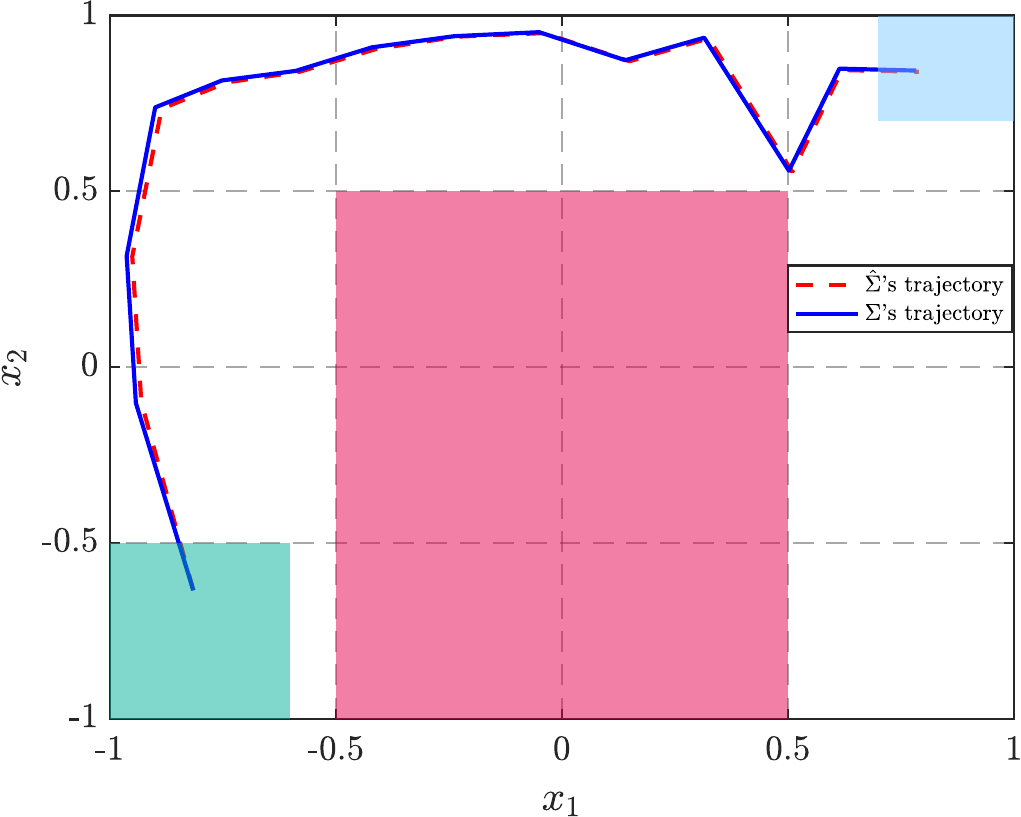}
	\caption{The trajectories must originate from the initial set \legendsquare{START!50}, reach the target set \legendsquare{TARGET!50}, while avoiding the obstacle \legendsquare{OBSTACLES!50}. As shown, the trajectories of the unknown system \eqref{sys_org} (\sampleline{X1, thick}) and its data-driven symbolic model (\sampleline{dashed, X2, thick}) closely align, demonstrating compliance with the system's reach-while-avoid property.}
	\label{fig:Rwa}
\end{figure}

\section{Conclusion}\label{Conc}
In this work, we introduced a data-driven abstraction-based methodology that synthesizes a hybrid controller for unknown nonlinear polynomial systems using only \emph{two input-state trajectories}. Our framework ensures that the desired system behavior is achieved by designing a controller derived from a symbolic model. To achieve this, we utilized alternating simulation functions (ASFs) to establish a formal relationship between the state trajectories of the unknown system and its symbolic counterpart. Our approach relies on a specific rank condition on the collected data, guaranteeing persistent excitation of the unknown system, which allows for direct ASF and controller design using finite-length data, bypassing the need for explicit system identification. The effectiveness of our proposed approach is demonstrated via a case study. Future work will focus on extending our framework to construct ASFs for a broader class of nonlinear systems beyond polynomial dynamics.

\bibliographystyle{alpha}
\bibliography{biblio}

\newcommand{\etalchar}[1]{$^{#1}$}
\begin{thebibliography}{ZMEM{\etalchar{+}}14}

\bibitem[ALZ23]{ajeleye2023data}
D.~Ajeleye, A.~Lavaei, and M.~Zamani.
\newblock Data-driven controller synthesis via finite abstractions with formal
  guarantees.
\newblock {\em IEEE Control Systems Letters}, 7:3453--3458, 2023.

\bibitem[ANL24a]{akbarzadeh2024data1}
O.~Akbarzadeh, A.~Nejati, and A.~Lavaei.
\newblock Data-driven safety controller synthesis for unknown systems with
  wireless communication networks.
\newblock In {\em 10th International Conference on Control, Decision and
  Information Technologies (CoDIT)}, pages 2229--2234, 2024.

\bibitem[ANL24b]{akbarzadeh2024data}
O.~Akbarzadeh, A.~Nejati, and A.~Lavaei.
\newblock From data to control: A formal compositional framework for
  large-scale interconnected networks.
\newblock {\em arXiv: 2409.12469}, 2024.

\bibitem[BD95]{bhatia1995cauchy}
R.~Bhatia and C.~Davis.
\newblock A {C}auchy-{S}chwarz inequality for operators with applications.
\newblock {\em Linear Algebra and its Applications}, 223:119--129, 1995.

\bibitem[BK08]{baier2008principles}
C.~Baier and J.-P. Katoen.
\newblock {\em Principles of model checking}.
\newblock MIT press, 2008.

\bibitem[BKMA20]{berberich2020data}
J.~Berberich, J.~K{\"o}hler, M.~A. M{\"u}ller, and F.~Allg{\"o}wer.
\newblock Data-driven model predictive control with stability and robustness
  guarantees.
\newblock {\em IEEE Transactions on Automatic Control}, 66(4):1702--1717, 2020.

\bibitem[BRAJ23a]{banse2023data1}
A.~Banse, L.~Romao, A.~Abate, and R.~Jungers.
\newblock Data-driven memory-dependent abstractions of dynamical systems.
\newblock In {\em Proceedings of Learning for Dynamics and Control Conference},
  pages 891--902. PMLR, 2023.

\bibitem[BRAJ23b]{banse2023data2}
A.~Banse, L.~Romao, A.~Abate, and R.~M. Jungers.
\newblock Data-driven abstractions via adaptive refinements and a kantorovich
  metric.
\newblock In {\em Proceedings of the 62nd Conference on Decision and Control
  (CDC)}, pages 6038--6043. IEEE, 2023.

\bibitem[CBDP24]{chen2024data}
H.~Chen, A.~Bisoffi, and C.~De~Persis.
\newblock Data-driven input-to-state stabilization.
\newblock {\em arXiv:2407.06044}, 2024.

\bibitem[CC06]{calafiore2006scenario}
G.~C. Calafiore and M.~C. Campi.
\newblock The scenario approach to robust control design.
\newblock {\em IEEE Transactions on Automatic Control}, 51(5):742--753, 2006.

\bibitem[CGP09]{campi2009scenario}
M.~C. Campi, S.~Garatti, and M.~Prandini.
\newblock The scenario approach for systems and control design.
\newblock {\em Annual Reviews in Control}, 33(2):149--157, 2009.

\bibitem[CPMJ22]{coppola2022data}
R.~Coppola, A.~Peruffo, and M.~Mazo~Jr.
\newblock Data-driven abstractions for verification of deterministic systems.
\newblock {\em arXiv:2211.01793}, 2022.

\bibitem[CPMJ24]{coppola2024data}
R.~Coppola, A.~Peruffo, and M.~Mazo~Jr.
\newblock Data-driven abstractions for control systems.
\newblock {\em arXiv:2402.10668}, 2024.

\bibitem[DCM22]{dorfler2022bridging}
F.~D{\"o}rfler, J.~Coulson, and I.~Markovsky.
\newblock Bridging direct and indirect data-driven control formulations via
  regularizations and relaxations.
\newblock {\em IEEE Transactions on Automatic Control}, 68(2):883--897, 2022.

\bibitem[DPT19]{de2019formulas}
C.~De~Persis and P.~Tesi.
\newblock Formulas for data-driven control: Stabilization, optimality, and
  robustness.
\newblock {\em IEEE Transactions on Automatic Control}, 65(3):909--924, 2019.

\bibitem[DSA21]{devonport2021symbolic}
A.~Devonport, A.~Saoud, and M.~Arcak.
\newblock Symbolic abstractions from data: A {PAC} learning approach.
\newblock In {\em Proceedings of the 60th Conference on Decision and Control
  (CDC)}, pages 599--604. IEEE, 2021.

\bibitem[ESL14]{esfahani2014performance}
P.~Mohajerin Esfahani, T.~Sutter, and J.~Lygeros.
\newblock Performance bounds for the scenario approach and an extension to a
  class of non-convex programs.
\newblock {\em IEEE Transactions on Automatic Control}, 60(1):46--58, 2014.

\bibitem[GDPT20]{guo2020learning}
M.~Guo, C.~De~Persis, and P.~Tesi.
\newblock Learning control for polynomial systems using sum of squares
  relaxations.
\newblock In {\em Proceedings of the 59th conference on decision and control
  (CDC)}, pages 2436--2441. IEEE, 2020.

\bibitem[GDPT21]{guo2021data}
M.~Guo, C.~De~Persis, and P.~Tesi.
\newblock Data-driven stabilization of nonlinear polynomial systems with noisy
  data.
\newblock {\em IEEE Transactions on Automatic Control}, 67(8):4210--4217, 2021.

\bibitem[GP09]{girard2009hierarchical}
A.~Girard and G.~J. Pappas.
\newblock Hierarchical control system design using approximate simulation.
\newblock {\em Automatica}, 45(2):566--571, 2009.

\bibitem[HSK{\etalchar{+}}22]{hashimoto2022learning}
K.~Hashimoto, A.~Saoud, M.~Kishida, T.~Ushio, and D.~V. Dimarogonas.
\newblock Learning-based symbolic abstractions for nonlinear control systems.
\newblock {\em Automatica}, 146:110646, 2022.

\bibitem[HW13]{hou2013model}
Z.-S. Hou and Z.~Wang.
\newblock From model-based control to data-driven control: Survey,
  classification and perspective.
\newblock {\em Information Sciences}, 235:3--35, 2013.

\bibitem[Lav23]{lavaei2023symbolic}
A.~Lavaei.
\newblock Symbolic abstractions with guarantees: A data-driven
  divide-and-conquer strategy.
\newblock In {\em Proceedings of the 62nd Conference on Decision and Control
  (CDC)}, pages 7994--7999. IEEE, 2023.

\bibitem[LF22]{lavaei2022data}
A.~Lavaei and E.~Frazzoli.
\newblock Data-driven synthesis of symbolic abstractions with guaranteed
  confidence.
\newblock {\em IEEE Control Systems Letters}, 7:253--258, 2022.

\bibitem[MGF21]{makdesi2021efficient}
A.~Makdesi, A.~Girard, and L.~Fribourg.
\newblock Efficient data-driven abstraction of monotone systems with
  disturbances.
\newblock {\em IFAC-PapersOnLine}, 54(5):49--54, 2021.

\bibitem[NZCZ22]{nejati2022data}
A.~Nejati, B.~Zhong, M.~Caccamo, and M.~Zamani.
\newblock Data-driven controller synthesis of unknown nonlinear polynomial
  systems via control barrier certificates.
\newblock In {\em Proceedings of Learning for Dynamics and Control Conference},
  pages 763--776. PMLR, 2022.

\bibitem[Pnu77]{pnueli1977temporal}
A.~Pnueli.
\newblock The temporal logic of programs.
\newblock In {\em Proceedings of the 18th annual symposium on foundations of
  computer science}, pages 46--57. IEEE, 1977.

\bibitem[PPDB16]{pola2016symbolic}
G.~Pola, P.~Pepe, and M.~D. Di~Benedetto.
\newblock Symbolic models for networks of control systems.
\newblock {\em IEEE Transactions on Automatic Control}, 61(11):3663--3668,
  2016.

\bibitem[PPSP04]{prajna2004sostools}
S.~Prajna, A.~Papachristodoulou, P.~Seiler, and P.~A. Parrilo.
\newblock {SOSTOOLS}: {C}ontrol applications and new developments.
\newblock In {\em Proceedings of IEEE International Conference on Robotics and
  Automation}, pages 315--320, 2004.

\bibitem[RWR16]{reissig2016feedback}
G.~Reissig, A.~Weber, and M.~Rungger.
\newblock Feedback refinement relations for the synthesis of symbolic
  controllers.
\newblock {\em IEEE Transactions on Automatic Control}, 62(4):1781--1796, 2016.

\bibitem[RZ16]{rungger2016scots}
M.~Rungger and M.~Zamani.
\newblock {SCOTS:} a tool for the synthesis of symbolic controllers.
\newblock In {\em Proceedings of the 19th international conference on hybrid
  systems: Computation and control}, pages 99--104, 2016.

\bibitem[SAZL24]{samari2024single}
B.~Samari, O.~Akbarzadeh, M.~Zaker, and A.~Lavaei.
\newblock From a single trajectory to safety controller synthesis of
  discrete-time nonlinear polynomial systems.
\newblock {\em arXiv:2409.10026}, 2024.

\bibitem[SNL24]{samari2024data}
B.~Samari, A.~Nejati, and A.~Lavaei.
\newblock Data-driven control of large-scale networks with formal guarantees: A
  small-gain free approach.
\newblock {\em arXiv:2411.06743}, 2024.

\bibitem[Stu99]{sturm1999using}
J.~F. Sturm.
\newblock Using {S}e{D}u{M}i 1.02, a {MATLAB} toolbox for optimization over
  symmetric cones.
\newblock {\em Optimization Methods and Software}, 11(1-4):625--653, 1999.

\bibitem[SZ19]{swikir2019compositional}
A.~Swikir and M.~Zamani.
\newblock Compositional synthesis of finite abstractions for networks of
  systems: {A} small-gain approach.
\newblock {\em Automatica}, 107:551--561, 2019.

\bibitem[Tab09]{tabuada2009verification}
P.~Tabuada.
\newblock {\em Verification and control of hybrid systems: {A} symbolic
  approach}.
\newblock Springer Science \& Business Media, 2009.

\bibitem[You12]{young1912classes}
W.~H. Young.
\newblock On classes of summable functions and their {F}ourier series.
\newblock {\em Proceedings of the Royal Society of London. Series A, Containing
  Papers of a Mathematical and Physical Character}, 87(594):225--229, 1912.

\bibitem[Zha06]{zhang2006schur}
F.~Zhang.
\newblock {\em The {S}chur complement and its applications}, volume~4.
\newblock Springer Science \& Business Media, 2006.

\bibitem[ZMEM{\etalchar{+}}14]{zamani2014symbolic}
M.~Zamani, P.~Mohajerin~Esfahani, R.~Majumdar, A.~Abate, and J.~Lygeros.
\newblock Symbolic control of stochastic systems via approximately bisimilar
  finite abstractions.
\newblock {\em IEEE Transactions on Automatic Control}, 59(12):3135--3150,
  2014.

\end{thebibliography}

\end{document}